\newcommand\btr{\mbox{\boldmath${\triangledown}$}}
\newtheorem{theorem}{Theorem}
\newtheorem{lemma}{Lemma}
\newtheorem{corollary}{Corollary}
\newcommand\btheta{\mbox{\boldmath${\theta}$}}
\newcommand\boldeta{\mbox{\boldmath${\eta}$}}
\newcommand{\bu}{{\bf u}}
\numberwithin{equation}{section} \numberwithin{theorem}{section}
\numberwithin{lemma}{section} \numberwithin{example}{section}
\numberwithin{corollary}{section}
\numberwithin{corollary}{section}
\begin{document}

\def\beq{\begin{equation}}
\def\eeq{\end{equation}}
\def\sign{\text{\rm sign}\,}
\allowdisplaybreaks

\title{\bf Estimation in nonstationary random coefficient autoregressive models}

\author{Istv\'an Berkes$^{\rm (1)}$, Lajos Horv\'ath$^{\rm (2)}$
and Shiqing Ling$^{\rm (3)}$}

\date{}
\maketitle

\begin{abstract}
We investigate the estimation of parameters in the random
coefficient autoregressive model $X_k = (\varphi + b_k) X_{k - 1}
+ e_k$, where $(\varphi, \omega^2, \sigma^2)$ is the parameter of
the process, $Eb^2_0 = \omega^2$, $E e^2_0 = \sigma^2$. We
consider a nonstationary RCA process satisfying $E \log |\varphi
+ b_0| \geq 0$ and show that $\sigma^2$ cannot be estimated by the
quasi-maximum likelihood method. The asymptotic normality of the
quasi-maximum likelihood estimator for $(\varphi, \omega^2)$ is
proven so the unit root problem does not exist in the random
coefficient autoregressive model.

\bigskip
\noindent {\bf Key words and phrases:} random coefficient model,
quasi--maximum likelihood, asymptotic normality, consistency, law of large numbers.

\medskip
\noindent {\bf AMS 2000 subject classification:} Primary 62F05;
secondary 62M10.

\end{abstract}

\renewcommand{\thefootnote}{}
\footnote{$^{(1)}$Department of Statistics, Technical University
Graz, Steyrergasse 17/IV, A--8010 Graz, Austria. Research
partially supported by OTKA grants T 43037 and K 61052.}
\footnote{$^{(2)}$Department of
Mathematics, University of Utah, 155 South 1400 East Salt Lake
City, UT 84112-0090, USA. Research partially supported by NSF
grant DMS 0604670 and grant RGC-HKUST 6428/06H.}
\footnote{$^{(3)}$Department of Mathematics and Statistics,
University of Science and Technology, Clear Water Bay Kowloon,
Hong Kong. Research partially supported by Hong Kong Research Grants Council}

\section{Introduction}
\label{s:1}

In this paper we are interested in the random coefficient model
(RCA) defined by the equations
\begin{equation}
X_k = (\varphi + b_k) X_{k - 1} + e_k,\;\;\; -\infty<k<\infty ,\label{eq:1.1}
\end{equation}
where $\varphi$ is a real parameter. The RCA process was
introduced by And\'el (1976) who also studied its
properties. For a detailed early study we refer to Nicholls and Quinn (1982). Throughout this paper we assume that
\begin{equation}
\{(b_k,e_k)\} \text{ are independent, identically distributed random
vectors.} \label{eq:1.2}
\end{equation}
Let $\log^+x=\max\{\log x, 0\}$. It follows from  Aue et al.\ (2006) (cf.\ also Quinn (1980, 1982)) that
under condition \eqref{eq:1.2} and
\begin{equation}
E\log^+|e_0|<\infty \;\;\;\mbox{and}\;\;\; E\log^+|\varphi+b_0|<\infty, \label{n1}
\end{equation}
equation
\eqref{eq:1.1} has a stationary, nonanticipating (i.e.\ $X_k$ is
measurable with respect to the $\sigma$--algebra generated by $(b_i,
e_i),i \leq k$) if and only if
\begin{equation}
-\infty \leq E \log | \varphi + b_0| < 0. \label{eq:1.5}
\end{equation}

Quinn and Nicholls (1981) started the study of the estimation of
the parameter of the process in \eqref{eq:1.1}. Let $ \btheta
= (\varphi, \omega^2, \sigma^2)$, where
\begin{equation}
Eb_0 = 0, \quad Eb^2_0 = \omega^2>0, \label{eq:1.6}
\end{equation}
\begin{equation}
Ee_0 = 0, \quad Ee^2_0 = \sigma^2 > 0 \label{eq:1.7}
\end{equation}
and
\begin{equation} \mbox{cov}(b_0,e_0)=0.\label{n2}
\end{equation}
Aue et al. (2006) used the quasi--maximum likelihood method to
estimate $\btheta$ when \eqref{eq:1.5} holds. They
established the strong consistency as well as the asymptotic
normality of the quasi--maximum likelihood estimator under minimal
conditions.

In this paper we consider the case when \eqref{eq:1.5} does not
hold. We assume
\begin{equation}\label{def}
 X_k = (\varphi + b_k) X_{k - 1} + e_k, \;\;\;1\leq k\leq n
\end{equation}
and
\begin{equation}\label{cond}
E \log | \varphi + b_0| \geq 0,
\end{equation}
i.e.\ we start the recursion in \eqref{def} from
the initial value $X_0 $ and \eqref{cond} guarantees that the solutions of \eqref{def} cannot converge. Throughout  this paper we assume that $X_0$ is a constant. Following the theory developed for the stationary case, we estimate the parameter $\btheta$ of the process in \eqref{def} using the quasi--likelihood method.  Assuming that $b_0$ and $e_0$ are
normally distributed, the conditional log--likelihood function (the
constant terms are omitted) is given by
\[
L_n({\bf u}) = \sum^n_{k = 1} \ell_k({\bf u}) \ \text{ with }
\ell_k({\bf u}) = -\frac12 \left(\log (x X^2_{k - 1} + y) +
\frac{(X_k - s X_{k - 1})^2}{xX^2_{k - 1} + y}\right),
\]
where ${\bf u} = (s,x, y)$. We show that
$$\frac{1}{n} L_n({\bf u}) \overset{P}{\longrightarrow} \infty$$
but
$$
\frac{1}{n}\bigl(L_n({\bf u}) - L_n(\btheta)\bigr) \overset{P}
{\longrightarrow} f(s,x)
\qquad\text{for all} \;\;\bu\;\;\mbox{with}\;\;x>0\;\;\mbox{and}\;\;y>0,
$$
where
\beq f(s,x) =
\frac12 \left\{ \log \frac{\omega^2}{x} + 1 - \frac{\omega^2}{x} -
\frac{(\varphi - s)^2}{x} \right\}. \label{eq:1.8}
\eeq
Since
$f(\cdot)$ does not depend on $y$, the quasi--maximum likelihood
method cannot be used to estimate $\sigma^2$. Since $|X_n|
\overset{P}{\longrightarrow} \infty \;(n\to \infty)$ (cf.\ Lemma \ref{lem:4.0}), so in
\eqref{eq:1.1} $b_n X_{n - 1}$ dominates $e_n$ which is the reason
why the variance of $e_0$ cannot be estimated by the
quasi--likelihood method. Hence we are interested in estimating  ${\boldeta} = (\varphi,
\omega^2)$. Now $\widehat{{\boldsymbol \eta}}_n =
\widehat{{\boldsymbol \eta}}_n(y)=(\widehat{\eta}_{n,1}(y), \widehat{\eta}_{n,2}(y))$ is defined by
\[
\max_{{\bf z} \in \Gamma} L_n ({\bf z}, y) =
L_n(\widehat{{\boldsymbol \eta}}_n, y),
\]
${\bf z} = (s,x)$ and the set $\Gamma$ satisfies
\begin{equation} \label{n3}
\Gamma = \bigl\{ (s,x) :\ s_* \leq s \leq s^*, \ x_* \leq x \leq
x^* \bigr\}
\end{equation}
with some $s_*<s^*$, $0 < x_*<x^*$. We prove the asymptotic
consistency of $\widehat { \boldeta}_n(y)$ for all~$y$ and consider
the asymptotic normality of $\widehat { \boldeta}_n$ under various
conditions.

\section{Results}
\label{sec:2}

First  we study the asymptotic consistency  of
$\widehat {\boldsymbol \eta}_n(y)$.
\begin{theorem}\label{thm-new}
If \eqref{eq:1.2}, \eqref{eq:1.6}--\eqref{cond} and\eqref{n3} hold,
then
\begin{equation} \label{eq:2.8}
\widehat{\boldsymbol \eta}_n(y) \overset{P}{\longrightarrow}
(\varphi, \omega^2)
\end{equation}
for all $y>0$.
\end{theorem}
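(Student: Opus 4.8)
The plan is to treat $\widehat{\boldsymbol \eta}_n(y)$ as a standard M-estimator and run the classical consistency argument: identify the limiting (population) objective, upgrade the given pointwise convergence to uniform convergence over the compact set $\Gamma$, and then conclude that the maximizers converge. Write $M_n(\mathbf{z}) = \frac1n\bigl(L_n(\mathbf{z},y) - L_n(\btheta)\bigr)$ for $\mathbf{z}=(s,x)\in\Gamma$; by the convergence recorded just before \eqref{eq:1.8} we already have $M_n(\mathbf{z}) \overset{P}{\longrightarrow} f(s,x)$ for each fixed $\mathbf{z}$ and each $y>0$. I will also assume, as is implicit for the conclusion to hold, that $(\varphi,\omega^2)$ lies in the interior of $\Gamma$.

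First I would carry out the identification step. A direct computation shows $f(s,x)\le f(\varphi,\omega^2)=0$ with equality only at $(s,x)=(\varphi,\omega^2)$: since $\partial f/\partial s = (\varphi-s)/x$ and $x>0$, the maximum in $s$ forces $s=\varphi$, after which $f(\varphi,x)=\tfrac12(\log t + 1 - t)$ with $t=\omega^2/x$, and $\log t + 1 - t\le 0$ holds for all $t>0$ with equality only at $t=1$, i.e.\ at $x=\omega^2$. Hence $f$ has a unique maximizer over $\Gamma$ at the true value, and by continuity and compactness, on the set $\{\mathbf{z}\in\Gamma:\ |\mathbf{z}-(\varphi,\omega^2)|\ge\varepsilon\}$ one has $\sup f = -\delta$ for some $\delta=\delta(\varepsilon)>0$.

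The core of the work, and the step I expect to be the main obstacle, is upgrading pointwise to uniform convergence, since the summands depend on the nonstationary, unbounded sequence $X_{k-1}$ and no ergodic theorem applies directly. The idea is to linearize around the growing $|X_{k-1}|$: writing $X_k - sX_{k-1} = (\varphi - s + b_k)X_{k-1} + e_k$ and dividing through by $X_{k-1}^2$,
\[
\ell_k(\mathbf{z},y) - \ell_k(\btheta) = \xi_k(\mathbf{z}) + R_k(\mathbf{z}),\qquad
\xi_k(\mathbf{z}) = -\tfrac12\Bigl(\log\tfrac{x}{\omega^2} + \tfrac{(\varphi - s + b_k)^2}{x} - \tfrac{b_k^2}{\omega^2}\Bigr),
\]
where the i.i.d.\ variables $\xi_k(\mathbf{z})$ depend only on $b_k$ and the remainder obeys a $\mathbf{z}$-free bound $\sup_{\mathbf{z}\in\Gamma}|R_k(\mathbf{z})| \le C(1+b_k^2+e_k^2)/(1+X_{k-1}^2) + C(1+|b_k|)|e_k|/(1+|X_{k-1}|)$. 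Because $(b_k,e_k)$ is independent of $X_{k-1}$, $|X_{k-1}|\overset{P}{\longrightarrow}\infty$ (Lemma \ref{lem:4.0}), and the factors $1/(1+X_{k-1}^2)$, $1/(1+|X_{k-1}|)$ are bounded by $1$, bounded convergence gives $E[\sup_{\mathbf{z}}|R_k(\mathbf{z})|]\to 0$; the terms being nonnegative, a Ces\`aro average plus Markov's inequality yields $\frac1n\sum_{k\le n}\sup_{\mathbf{z}}|R_k(\mathbf{z})| \overset{P}{\longrightarrow} 0$. For the i.i.d.\ part, $\sup_{\mathbf{z}\in\Gamma}|\xi_k(\mathbf{z})|\le C(1+b_k^2)$ is integrable by \eqref{eq:1.6} and $\mathbf{z}\mapsto\xi_k(\mathbf{z})$ is continuous, so a uniform law of large numbers for i.i.d.\ sequences with continuous, dominated summands (Jennrich) gives $\sup_{\mathbf{z}\in\Gamma}\bigl|\frac1n\sum_{k\le n}\xi_k(\mathbf{z}) - f(\mathbf{z})\bigr|\to 0$. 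Combining the two pieces yields $\sup_{\mathbf{z}\in\Gamma}|M_n(\mathbf{z}) - f(\mathbf{z})| \overset{P}{\longrightarrow} 0$.

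Finally I would close with the standard argmax step. On the high-probability event $\{\sup_{\Gamma}|M_n - f| < \delta/2\}$ we have $M_n(\widehat{\boldsymbol \eta}_n)\ge M_n(\varphi,\omega^2) > f(\varphi,\omega^2) - \delta/2 = -\delta/2$, whereas on the set $\{|\mathbf{z}-(\varphi,\omega^2)|\ge\varepsilon\}$ one gets $M_n(\mathbf{z}) < f(\mathbf{z})+\delta/2 \le -\delta + \delta/2 = -\delta/2$; these are incompatible, so $|\widehat{\boldsymbol \eta}_n - (\varphi,\omega^2)|<\varepsilon$ on that event. Hence $P(|\widehat{\boldsymbol \eta}_n(y) - (\varphi,\omega^2)|\ge\varepsilon)\to 0$ for every $\varepsilon>0$ and every $y>0$, which is \eqref{eq:2.8}. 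All moment requirements used above, namely $Eb_0^2=\omega^2<\infty$ and $Ee_0^2=\sigma^2<\infty$ (hence $E|b_0e_0|<\infty$), are supplied by \eqref{eq:1.6}--\eqref{eq:1.7}, so the only genuinely delicate point is the uniform control of the remainders $R_k$ against the exploding $X_{k-1}$.
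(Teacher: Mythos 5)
Your proof is correct, and its skeleton matches the paper's: both arguments rest on Lemma~\ref{lem:4.0}, on uniform convergence of $\frac1n\bigl(L_n(\mathbf{z},y)-L_n(\btheta)\bigr)$ to $f$ over the compact set $\Gamma$, on the fact that $f$ is uniquely maximized at $(\varphi,\omega^2)$, and on the standard argmax step (the paper delegates that last step to Pfanzagl (1969); you carry it out explicitly, which is fine). The genuine difference is how the uniform convergence is established. The paper's Lemma~\ref{lem:4.1} expands the decomposition \eqref{p0} term by term; every term is either an i.i.d.\ average multiplied by a factor such as $1/x$ or $(\varphi-s)$, whose supremum over $\Gamma$ is pulled out explicitly so that ordinary weak laws of large numbers suffice, or a term dominated by $1/(x_*X_{k-1}^2+y_*)$, which is killed by Markov's inequality together with $E\bigl[1/(x_*X_n^2+y_*)\bigr]\to 0$. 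You instead split each summand into an i.i.d.\ random field $\xi_k(\mathbf{z})$ depending only on $b_k$ plus a remainder $R_k$ with a single $\mathbf{z}$-free bound, and invoke Jennrich's uniform law of large numbers for the field; your remainder estimate, the verification $E\xi_k(\mathbf{z})=f(\mathbf{z})$, and the moment requirements ($Eb_0^2,\,Ee_0^2<\infty$, hence $E|b_0e_0|<\infty$ by Cauchy--Schwarz) all check out. What your packaging buys is modularity: uniformity in $\mathbf{z}$ comes for free from the ULLN, and the argument would survive even if the parameter dependence did not factor out of each term. What the paper's hand-crafted route buys is that it remains entirely elementary (no ULLN citation) and that, as stated in Lemma~\ref{lem:4.1}, the convergence is uniform also in $y\in[y_*,y^*]$, which is what yields the stronger Lemma~\ref{lem:4.2} (consistency uniform over $y$); your argument fixes $y$, which suffices for the theorem as stated. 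One small remark: you only need $(\varphi,\omega^2)\in\Gamma$, not interiority --- membership is precisely the hypothesis the paper adds in Lemma~\ref{lem:4.2}, and interiority matters only for the asymptotic normality results.
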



Next we consider the asymptotic normality of $\widehat {\boldsymbol \eta}_n(y)$.  Let
\begin{equation}\label{omega}
\Omega_0 = \left(\begin{matrix}
\omega^2 & \omega^2Eb^3_0
\vspace{.3 cm}\\
\omega^2Eb^3_0 & \mbox{var}(b^2_0)
\end{matrix} \right).
\end{equation}

\begin{theorem}
\label{th:2.4}
 If the conditions of Theorem \ref{thm-new} are
 satisfied and
\beq\label{moment}
 E e^4_0 < \infty \quad \text{ and } \quad Eb^4_0 < \infty,
\eeq
then the distribution of $ n^{1/2}(\widehat{\boldsymbol\eta}_n (\sigma^2) - (\varphi,\omega^2))$
converges to the bivariate normal distribution  with mean $\bf 0$ and covariance
matrix~$\Omega_0$.

\end{theorem}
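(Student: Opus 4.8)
The plan is to establish asymptotic normality via the standard Taylor-expansion argument for M-estimators, built on the consistency guaranteed by Theorem \ref{thm-new}. Writing $\boldeta_0=(\varphi,\omega^2)$ and $\mathbf{z}=(s,x)$, I would first verify that the true parameter is an interior point of $\Gamma$ (otherwise the score at the maximizer need not vanish), and then expand the gradient of the objective. Since $\widehat{\boldeta}_n(\sigma^2)$ maximizes $L_n(\mathbf{z},\sigma^2)$ over $\Gamma$, the first-order condition gives $\btr_{\mathbf{z}} L_n(\widehat{\boldeta}_n,\sigma^2)=\mathbf{0}$ for large $n$ with probability tending to $1$. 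A first-order Taylor expansion of this gradient around $\boldeta_0$ yields
\beq
\mathbf{0}=\btr_{\mathbf{z}} L_n(\boldeta_0,\sigma^2)+\left(\btr^2_{\mathbf{z}} L_n(\boldeta_n^*,\sigma^2)\right)\bigl(\widehat{\boldeta}_n-\boldeta_0\bigr),
\eeq
where $\boldeta_n^*$ lies on the segment between $\widehat{\boldeta}_n$ and $\boldeta_0$. Rearranging gives the representation
\beq
n^{1/2}\bigl(\widehat{\boldeta}_n-\boldeta_0\bigr)=-\left(\tfrac{1}{n}\btr^2_{\mathbf{z}} L_n(\boldeta_n^*,\sigma^2)\right)^{-1}\left(n^{-1/2}\btr_{\mathbf{z}} L_n(\boldeta_0,\sigma^2)\right),
\eeq
so the result reduces to two ingredients: a central limit theorem for the normalized score vector, and a law of large numbers for the Hessian.

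For the score term, I would compute $\partial\ell_k/\partial s$ and $\partial\ell_k/\partial x$ at $\boldeta_0$ and show that $n^{-1/2}\btr_{\mathbf{z}} L_n(\boldeta_0,\sigma^2)$ is asymptotically normal. The key point is that because $|X_n|\overset{P}{\to}\infty$ (Lemma \ref{lem:4.0}), the dominant terms in the score involve ratios such as $X_k/X_{k-1}=\varphi+b_k+e_k/X_{k-1}$, so the leading stochastic behavior is driven by $b_k$ rather than by $X_k$ directly. Heuristically, the score components collapse to sums of martingale differences built from $b_k$ and $b_k^2-\omega^2$, whose limiting covariance is exactly $\Omega_0$ in \eqref{omega}: the $(1,1)$ entry $\omega^2$ comes from $\mathrm{var}(b_0)$, the off-diagonal $\omega^2 Eb_0^3$ from $\mathrm{cov}(b_0,b_0^2)$, and the $(2,2)$ entry $\mathrm{var}(b_0^2)$ from the quadratic score. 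The moment conditions \eqref{moment}, namely $Eb_0^4<\infty$ and $Ee_0^4<\infty$, are precisely what is needed to control these sums and to apply a martingale central limit theorem. Care is required here because the summands are not a genuine martingale-difference sequence with respect to the natural filtration unless one subtracts the right conditional means; the error terms involving $e_k/X_{k-1}$ must be shown to be asymptotically negligible, using that $X_{k-1}$ grows geometrically so that $E(e_k/X_{k-1})^2$ is summable.

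For the Hessian term, I would show that $\tfrac{1}{n}\btr^2_{\mathbf{z}} L_n(\boldeta_n^*,\sigma^2)$ converges in probability to a deterministic negative-definite matrix, uniformly over $\boldeta_n^*$ in a shrinking neighborhood of $\boldeta_0$; combined with consistency this pins down the limit at $\boldeta_0$ itself. Since $\boldeta_n^*\overset{P}{\to}\boldeta_0$, a uniform law of large numbers for the second derivatives, again exploiting the explicit form of $\ell_k$ and the growth of $X_k$, identifies the limit. The natural candidate is a matrix whose inverse, sandwiched as above, reproduces $\Omega_0$; in fact the problem has the structure of a correctly-specified likelihood in the limit, so one expects the Hessian limit to equal $-\Omega_0^{-1}$ up to the normalization, giving the sandwich form $\Omega_0^{-1}\Omega_0\Omega_0^{-1}=\Omega_0$. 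Slutsky's theorem then combines the two pieces. The main obstacle I anticipate is the score analysis: establishing that the $e_k$-dependent remainder terms vanish and that the $b_k$-driven main terms satisfy the martingale CLT with limiting covariance exactly $\Omega_0$, which is delicate precisely because the process is nonstationary and the usual ergodic-theorem machinery available in the stationary case of Aue et al.\ (2006) is unavailable here.
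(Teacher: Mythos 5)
Your skeleton---first-order condition at the maximizer (legitimate only if $(\varphi,\omega^2)$ is interior to $\Gamma$, a point you rightly flag and the paper leaves implicit), mean-value expansion of the gradient, a CLT for the normalized score, a uniform law of large numbers for the Hessian, then Slutsky---is exactly the paper's route (its \eqref{eq:4.13}--\eqref{eq:4.14}, built on Lemmas \ref{lem:4.3} and \ref{lem:4.5}). However, your identification of the two limiting matrices is wrong, and this is not a presentational slip: it changes the answer. By Lemma \ref{lem:4.3} the score at $(\varphi,\omega^2,\sigma^2)$ equals, up to $o_P(n^{1/2})$, the pair of i.i.d.\ sums $\sum_k b_k/\omega^2$ and $\sum_k (b_k^2-\omega^2)/(2\omega^4)$, so its normalized covariance is the paper's matrix $\Omega_*$, whose entries carry \emph{inverse} powers of $\omega^2$---it is not $\Omega_0$. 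By Lemma \ref{lem:4.5} the Hessian limit at the true parameter is the \emph{diagonal} matrix $\Omega_{**}=\mathrm{diag}\bigl(-1/\omega^2,\,-1/(2\omega^4)\bigr)$---it is not $-\Omega_0^{-1}$, which has nonzero off-diagonal entries whenever $Eb_0^3\neq 0$. Your appeal to a ``correctly-specified likelihood in the limit'' (information equality) is false here: the quasi-likelihood is Gaussian while $b_0$ need not be, and $-\Omega_{**}$ coincides with the score covariance only when $Eb_0^3=0$ and $\mathrm{var}(b_0^2)=2\omega^4$, i.e.\ when $b_0$ has Gaussian third and fourth moments. The whole point of $\Omega_0$ in \eqref{omega} is that the sandwich $\Omega_{**}^{-1}\Omega_*\Omega_{**}^{-1}$ does \emph{not} collapse; following your reasoning one would obtain a covariance matrix missing the $\omega^2Eb_0^3$ off-diagonal term whenever $b_0$ is skewed. (Also, as written your sandwich identity is not even algebraically consistent: $\Omega_0^{-1}\Omega_0\Omega_0^{-1}=\Omega_0^{-1}$, not $\Omega_0$.)

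There is a second gap in the remainder control. You propose to kill the $e_k$-terms ``using that $X_{k-1}$ grows geometrically so that $E(e_k/X_{k-1})^2$ is summable.'' Under the hypotheses of the theorem only \eqref{cond}, i.e.\ $E\log|\varphi+b_0|\ge 0$, is assumed; geometric growth requires the strict inequality \emph{plus} a nondegeneracy condition (Theorem \ref{po} and Corollary \ref{co:1}) and is simply unavailable in the critical case $E\log|\varphi+b_0|=0$. All one has is $|X_n|\overset{P}{\longrightarrow}\infty$ (Lemma \ref{lem:4.0}). Moreover the relevant quantities are never of the form $e_k/X_{k-1}$ (whose second moment may fail to exist, since $X_{k-1}$ can be arbitrarily close to $0$ for fixed $k$), but the bounded ratios $e_kX_{k-1}/(\omega^2X_{k-1}^2+y)$ and $X_{k-1}^2/(\omega^2X_{k-1}^2+y)^2$: the paper shows their second moments tend to $0$ by dominated convergence and concludes through a variance-of-Cesàro-mean computation and Chebyshev/Markov, with no summability anywhere. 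Your martingale-CLT framing is workable in principle (the paper sidesteps it by reducing to genuinely i.i.d.\ sums and the classical CLT), but both the matrix identifications and the remainder estimates must be replaced by the arguments just described for the proof to go through.
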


We note that Theorems~ \ref{thm-new} and  \ref{th:2.4} were
obtained by Ling and Li (2006) as a preliminary result for the study of non-stationary double AR(1) processes  when $b_0$ and $e_0$ are normally
distributed and independent. Their result implies that in case of normal $(b_0, e_0)$,  $\sigma^2$ cannot be
estimated by the quasi--maximum likelihood method. A similar
phenomenon was also observed by Jensen and Rahbek (2004a,b) in
nonstationary ARCH models. Theorem \ref{th:2.4} assumes that
$\sigma^2$ is known. %
 We show in the next
section that $\widehat{\boldsymbol\eta}_n (y)$ is asymptotically
normal for all $y>0$ under the  condition $E\log |\varphi+b_0| >0$.\\

Usually, the statistical inference is about $\varphi$, the
expected value of the autoregressive coefficient. We show that
$\widehat{\eta}_{n,1}(y)$ is asymptotically normal for all $y$ so there is
no need to know $\sigma^2$ to get asymptotic statistical inference
about $\varphi$.
\begin{theorem}
\label{th:2.5}
 We assume that the conditions of Theorem \ref{thm-new} are
 satisfied and \eqref{moment} holds. Then for any $y>0$ the distribution of  $\sqrt{n}({\widehat{\eta}_{n,1} (y) -
\varphi})/{\omega}$ converges to the standard normal distribution  
and consequently the distribution of $\sqrt{n}({\widehat{\eta}_{n,1} (y) -
\varphi})/{\sqrt{\widehat{\eta}_{n,2}(y)}}$ converges also to the standard normal distribution.

%
\end{theorem}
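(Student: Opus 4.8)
The plan is to exploit the fact that the first-order condition in the $s$-direction is linear in $s$, which lets me solve for the estimator of $\varphi$ explicitly and read off its leading behaviour. Writing $\widehat s=\widehat\eta_{n,1}(y)$ and $\widehat x=\widehat\eta_{n,2}(y)$, Theorem~\ref{thm-new} gives $\widehat x\overset{P}{\longrightarrow}\omega^2$, so for large $n$ the maximiser lies in the interior of $\Gamma$ with probability tending to one and the equation $\partial L_n/\partial s=0$ holds there. Since
\[
\frac{\partial L_n}{\partial s}(\widehat s,\widehat x,y)=\sum_{k=1}^n\frac{(X_k-\widehat s X_{k-1})X_{k-1}}{\widehat x X_{k-1}^2+y}=0
\]
is linear in $\widehat s$, and $X_k-\varphi X_{k-1}=b_kX_{k-1}+e_k$, I obtain the exact identity
\[
\widehat s-\varphi=\frac{\displaystyle\sum_{k=1}^n\frac{b_kX_{k-1}^2+e_kX_{k-1}}{\widehat x X_{k-1}^2+y}}{\displaystyle\sum_{k=1}^n\frac{X_{k-1}^2}{\widehat x X_{k-1}^2+y}}=:\frac{N_n}{D_n}.
\]
The whole point is that, because $|X_{k-1}|\overset{P}{\longrightarrow}\infty$ (Lemma~\ref{lem:4.0}), each summand $X_{k-1}^2/(\widehat x X_{k-1}^2+y)$ is close to $1/\widehat x$, so neither $N_n$ nor $D_n$ feels the value of $y$ to leading order; this is exactly why the limit will be the same for every $y>0$.

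First I would treat the denominator. Writing $X_{k-1}^2/(\widehat x X_{k-1}^2+y)=\widehat x^{-1}\{1+y/(\widehat x X_{k-1}^2)\}^{-1}$ and using $x_*\le\widehat x\le x^*$ to bound the correction uniformly in the data-dependent $\widehat x$, a Ces\`aro argument based on $1/X_{k-1}^2\to 0$ gives $D_n/n\overset{P}{\longrightarrow}1/\omega^2$. Next I would split $N_n=N_n^{(b)}+N_n^{(e)}$ along $b_k$ and $e_k$. For the main term I write
\[
\frac{N_n^{(b)}}{\sqrt n}=\frac{1}{\widehat x\sqrt n}\sum_{k=1}^n b_k-\frac{1}{\widehat x\sqrt n}\sum_{k=1}^n b_k\,\frac{y/(\widehat x X_{k-1}^2)}{1+y/(\widehat x X_{k-1}^2)}.
\]
The first term converges to $\omega^{-2}N(0,\omega^2)=N(0,\omega^{-2})$ by the central limit theorem for $n^{-1/2}\sum b_k$ (guaranteed by \eqref{moment}) together with $\widehat x\overset{P}{\longrightarrow}\omega^2$ and Slutsky's theorem. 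The correction term and the martingale $N_n^{(e)}/\sqrt n$ are both shown to be $o_P(1)$: for fixed $x$ they are martingale transforms whose conditional $L^2$ norms are controlled by $n^{-1}\sum_k X_{k-1}^{-4}$ and $\sigma^2 n^{-1}\sum_k X_{k-1}^{-2}$ respectively, both of which tend to $0$ by Ces\`aro once $X_{k-1}^2\to\infty$; uniformity over $x\in[x_*,x^*]$ then lets me replace $x$ by $\widehat x$.

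Combining these, $N_n/\sqrt n\overset{d}{\longrightarrow}N(0,\omega^{-2})$ and $D_n/n\overset{P}{\longrightarrow}\omega^{-2}$, so by Slutsky
\[
\sqrt n\,(\widehat\eta_{n,1}(y)-\varphi)=\frac{N_n/\sqrt n}{D_n/n}\overset{d}{\longrightarrow}\frac{N(0,\omega^{-2})}{\omega^{-2}}=N(0,\omega^2),
\]
which is the first assertion. The second assertion follows at once from Slutsky's theorem and the consistency $\widehat\eta_{n,2}(y)\overset{P}{\longrightarrow}\omega^2$ of Theorem~\ref{thm-new}, since then $\sqrt{\widehat\eta_{n,2}(y)}\overset{P}{\longrightarrow}\omega$ and
\[
\frac{\sqrt n\,(\widehat\eta_{n,1}(y)-\varphi)}{\sqrt{\widehat\eta_{n,2}(y)}}=\frac{\sqrt n\,(\widehat\eta_{n,1}(y)-\varphi)}{\omega}\cdot\frac{\omega}{\sqrt{\widehat\eta_{n,2}(y)}}\overset{d}{\longrightarrow}N(0,1).
\]
The main obstacle is the rigorous control of the $y$-dependent correction terms: the estimator $\widehat x$ is a complicated functional of the whole sample, so it is not measurable with respect to any of the natural $\sigma$-algebras $\mathcal F_{k-1}$, and the only quantitative input available under the boundary-permitting hypothesis $E\log|\varphi+b_0|\ge 0$ is that $X_{k-1}^2\to\infty$, with no exponential rate. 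Handling this through the uniform bounds $x_*\le\widehat x\le x^*$ combined with fixed-$x$ martingale estimates, and verifying that the resulting Ces\`aro averages of $X_{k-1}^{-2}$ and $X_{k-1}^{-4}$ vanish, is where the work concentrates; it is precisely the cleanness of the leading term $\widehat x^{-1}n^{-1/2}\sum b_k$, which does not involve $y$ at all, that makes the first coordinate asymptotically normal for every $y>0$.
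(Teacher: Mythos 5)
Your proposal is correct in substance but takes a genuinely different route from the paper. The paper proves the first assertion by a two\-/dimensional mean value expansion of the score around $(\varphi,\omega^2)$ (its equation \eqref{eq:4.13}), inversion of the resulting matrix $(c_{ij}(n))$, and a delicate treatment of the cross term $c_{12}(n)n^{-1}g_{2,n}(y)$: since for general $y\neq\sigma^2$ one only has $g_{2,n}(y)=o_P(n)$ (see \eqref{kell}), the paper needs Lemma \ref{cont} to show $|c_{12}(n)|=|\widehat{\eta}_{n,1}(y)-\varphi|{\cal O}_P(1)+{\cal O}_P(n^{-1/2})$, and then solves a self\-/referential bound for $\widehat{\eta}_{n,1}(y)-\varphi$ before applying \eqref{eq:4.10}. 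You instead exploit the linearity of $\partial L_n/\partial s$ in $s$ to profile out the first coordinate exactly, obtaining the closed\-/form ratio $N_n/D_n$; this bypasses $g_{2,n}(y)$, the matrix inversion, and the self\-/referential step entirely. A notable by\-/product of your route is that it never uses the fourth\-/moment hypothesis \eqref{moment}: the CLT for $n^{-1/2}\sum_k b_k$ and your $L^2$ bounds on the error terms require only $Eb_0^2<\infty$ and $Ee_0^2<\infty$, whereas the paper's proof passes through \eqref{kell}, i.e.\ through \eqref{d1}--\eqref{d3}, which do need fourth moments. Two remarks. First, both your exact identity and the paper's expansion presuppose that $\widehat{\boldsymbol\eta}_n(y)$ is an interior stationary point, which follows from consistency (Theorem \ref{thm-new}) provided $(\varphi,\omega^2)$ lies in the interior of $\Gamma$; you make this explicit, the paper leaves it implicit. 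Second, the one step you assert rather than prove---passing from fixed\-/$x$ martingale estimates to estimates valid at the random $\widehat x$---is exactly the crux the paper settles in Lemma \ref{cont}, and it cannot be handled by a crude absolute\-/value bound, since $n^{-1/2}\sum_{k}E\bigl(x_*X_{k-1}^2+y\bigr)^{-1}$ need not vanish (Lemma \ref{lem:4.0} gives $X_{k-1}^2\overset{P}{\to}\infty$ with no rate). The repair is the paper's own device: for $A_n(x)=n^{-1/2}\sum_k b_k/(xX_{k-1}^2+y)$ and its $e_kX_{k-1}$ analogue, one checks $E\bigl(A_n(x)-A_n(x')\bigr)^2\leq C(x-x')^2$ by the mean value theorem, invokes tightness in ${\cal C}[x_*,x^*]$ (Billingsley), and concludes $\sup_{x_*\leq x\leq x^*}|A_n(x)|=o_P(1)$, after which $x$ may be replaced by $\widehat x$. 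With that supplied, your argument is complete, and your handling of the second assertion (Slutsky plus $\widehat{\eta}_{n,2}(y)\overset{P}{\to}\omega^2$) coincides with the paper's.
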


 Next we are
interested in the asymptotic distribution of
$\widehat{\boldsymbol\eta}_n (\sigma^2) - (\varphi, \omega^2)$
without assuming \eqref{moment}. The assumption $Eb^4_0<\infty$
will be replaced with the requirement that $b_0^2$ is in the
domain of attraction of a stable law.  This means
that

\beq\label{stab1} P\{b_0^2>x\}= x^{-\alpha}L(x),\;\;\mbox{where  }
1< \alpha<2\;\;
 \mbox{and }
L\;\mbox{is a slowly varying function at } \infty .
\eeq
Asumption $\alpha>1$ guarantees that $Eb_0^2=\omega^2$ exists. Let
$$
a_n=\inf\{x: x^{-\alpha}L(x)\leq 1/n\}.
$$
If \eqref{stab1} holds, then
\beq\label{stab2} \frac{1}{a_n}\sum_{1\leq i \leq
n}(b_i^2-\omega^2)\overset{\mathcal D}{\longrightarrow} \xi , \eeq
where $\xi$ is a stable random variable with characteristic
function
\beq\label{char}
\exp\{-d|t|^\alpha(1+{\bf i}\mbox{sign}(t)\tan
(\pi\alpha/2))\},\;\;\;\;\mbox{if}\;\;1<\alpha<2,
\eeq
 and $d$ is a positive constant (cf.\ Breiman (1968, p.\ 204).
\begin{theorem}
\label{stable}
 We assume that the conditions of Theorem \ref{thm-new} are
 satisfied, \eqref{stab1} and
 \beq\label{stab3}
 E|e_0|^\nu<\infty\;\;\;\mbox{with some}\;\;\nu>2\alpha/(\alpha-1)
 \eeq
 hold.
 Then $n^{1/2}(\widehat{\eta}_{n,1}(\sigma^2)-\varphi)$ and 
$ n(\widehat{\eta}_{n,2}(\sigma^2)-\omega^2)/a_n$ are asymptotically independent, the distribution of $n^{1/2}(\widehat{\eta}_{n,1}(\sigma^2)-\varphi)$ converges to the normal distribution with mean 0 and variance $\omega^2$ and the distribution of $ n(\widehat{\eta}_{n,2}(\sigma^2)-\omega^2)/a_n$ converges to the stable distribution with characteristic function given in \eqref{char}.
 \end{theorem}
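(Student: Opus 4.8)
The plan is to analyze the first-order conditions (score equations) for the quasi-maximum likelihood estimator $\widehat{\boldsymbol\eta}_n(\sigma^2)=(\widehat\eta_{n,1},\widehat\eta_{n,2})$ and to track how the heavy-tailed behavior of $b_0^2$ propagates differently into the two coordinates. Writing $\ell_k(\mathbf u)$ with $\mathbf u=(s,x,\sigma^2)$, the score vector $\nabla L_n$ has components $\partial_s L_n$ and $\partial_x L_n$. Since $\widehat{\boldsymbol\eta}_n$ is an interior maximizer (guaranteed asymptotically by the consistency in Theorem~\ref{thm-new}), a Taylor expansion of the score around $(\varphi,\omega^2)$ yields, schematically,
\begin{equation}\label{eq:plan-taylor}
\sqrt n\,(\widehat\eta_{n,1}-\varphi,\ \widehat\eta_{n,2}-\omega^2)^{\!\top}\approx -\,H_n^{-1}\,\sqrt n\,\nabla L_n(\varphi,\omega^2),
\end{equation}
where $H_n$ is the (normalized) Hessian. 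The key observation, which I expect to mirror the decomposition behind $\Omega_0$ in Theorem~\ref{th:2.4}, is that the two components of the score are driven by sums of $b_k$ and of $b_k^2-\omega^2$ respectively: under the parametrization the dominant stochastic terms reduce to $n^{-1/2}\sum b_k$ in the $\varphi$-direction and a sum of $(b_k^2-\omega^2)$-type terms in the $\omega^2$-direction, since asymptotically $X_k-\varphi X_{k-1}\approx b_k X_{k-1}$ as $|X_k|\to\infty$ by Lemma~\ref{lem:4.0}.

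The first main step is to establish that the normalized Hessian $H_n$ converges in probability to a deterministic diagonal matrix. I would argue that off-diagonal and higher-order remainder terms vanish under \eqref{stab3}, using that the moment condition $\nu>2\alpha/(\alpha-1)$ is exactly what is needed to control products of $e_k$ with ratios like $X_{k-1}/(xX_{k-1}^2+\sigma^2)$; the threshold $2\alpha/(\alpha-1)$ should emerge from balancing the tail index $\alpha$ of $b_0^2$ against the moments of $e_0$ that appear in the cross terms. The second step is the core dichotomy: the $\varphi$-score satisfies a classical CLT, giving $\sqrt n(\widehat\eta_{n,1}-\varphi)\Rightarrow N(0,\omega^2)$ with the usual $n^{1/2}$ rate (here $b_k$ itself has finite variance $\omega^2$, so no heavy-tail effect enters the first coordinate), while the $\omega^2$-score is governed by $\sum(b_k^2-\omega^2)$, which under \eqref{stab1} obeys the stable limit \eqref{stab2}--\eqref{char} with the anomalous norming $a_n$ rather than $\sqrt n$. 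This explains the two different rates $n^{1/2}$ and $a_n/n$ in the statement.

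The hardest part will be the asymptotic independence of the two coordinates. The natural route is to show joint convergence of the normalized score vector $(n^{-1/2}\sum b_k,\ a_n^{-1}\sum(b_k^2-\omega^2))$ to a product of a Gaussian and an independent stable law. I would invoke a joint limit theorem for sums of i.i.d.\ vectors whose first marginal is in the Gaussian domain of attraction and whose second is in the domain of attraction of the stable law in \eqref{stab1}; the standard fact here is that when one marginal has a strictly heavier tail, the limits decouple, because the large jumps that drive the stable component occur on a different scale than the Gaussian fluctuations, so their joint characteristic function factors. Concretely, one conditions on the events where $b_k^2$ is large versus moderate and shows the contributions of large $b_k$ to the $\varphi$-score are asymptotically negligible after the $n^{-1/2}$ scaling. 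Once joint convergence of the scores is in hand, the decoupling in \eqref{eq:plan-taylor} follows because $H_n^{-1}$ is asymptotically diagonal, so each estimator inherits exactly one of the two independent limits. The main technical obstacle I anticipate is making the truncation argument for independence fully rigorous while simultaneously controlling the $e_k$-dependent remainder terms in the score expansion under only the moment condition \eqref{stab3}, rather than the stronger fourth-moment assumption \eqref{moment} used in Theorem~\ref{th:2.4}.
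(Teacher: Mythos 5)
Your outline follows the same route as the paper's proof: Taylor expansion of the score at $(\varphi,\omega^2)$, convergence of the normalized Hessian to the diagonal matrix with entries $-1/\omega^2$ and $-1/(2\omega^4)$ (the paper's \eqref{eq:4.14}), reduction of the two score components to $\omega^{-2}\sum_k b_k$ and $(2\omega^4)^{-1}\sum_k(b_k^2-\omega^2)$ (the paper's Lemma \ref{lem:4.6}), and then the joint normal/stable limit. Incidentally, the part you single out as hardest --- the asymptotic independence --- is the cheapest step in the paper: it is quoted from operator-stable limit theory (Meerschaert and Scheffler (2001)); the vector $\bigl(n^{-1/2}\sum_k b_k,\ a_n^{-1}\sum_k(b_k^2-\omega^2)\bigr)$ converges jointly, and the Gaussian and purely non-Gaussian components of such a limit are automatically independent. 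Your truncation sketch is a viable substitute for that citation.

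There are, however, two concrete gaps in the steps you treat as routine. First, the reduction of the $x$-score: one must show $g_{2,n}(\sigma^2)=(2\omega^4)^{-1}\sum_{k\leq n}(b_k^2-\omega^2)+o_P(a_n)$, and the troublesome error term is essentially $\sum_{k\leq n}(b_k^2-\omega^2)z_{k-1}$ with $z_{k-1}=X_{k-1}^4/(\omega^2X_{k-1}^2+\sigma^2)^2-\omega^{-4}\to 0$ in probability. Neither standard bound works: Chebyshev fails because $b_0^2$ has infinite variance under \eqref{stab1}, and a first-moment bound gives only $o_P(n)$, which is useless since $a_n$ is regularly varying of index $1/\alpha<1$, so $a_n/n\to 0$. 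The paper's Lemma \ref{lem:4.6} supplies the missing idea: truncate $\epsilon_k=b_k^2-\omega^2$ at level $\tau_na_n$ with $\tau_n\to\infty$ slowly, bound the truncated variance via Karamata-type estimates for $\int_{-t}^{t}x^2\,dF(x)$, control the exceedance probabilities and the centering term, and exploit $z_k\to 0$. (Your reading of \eqref{stab3} is nearly right: it enters through H\"older's inequality to give $Ee_0^2b_0^2<\infty$, since $E|b_0|^{2\tau}<\infty$ for all $\tau<\alpha$, which controls the cross term $\sum_k e_kb_kX_{k-1}^3/(\omega^2X_{k-1}^2+\sigma^2)^2$.) Second, your claim that asymptotic diagonality of $H_n^{-1}$ lets each coordinate ``inherit'' its own limit would fail as stated: in the $\varphi$-equation the off-diagonal entry $c_{12}(n)$ of the inverse Hessian multiplies $g_{2,n}=O_P(a_n)$, and since $a_n/n^{1/2}\to\infty$, knowing only $c_{12}(n)=o_P(1)$ does not make $n^{1/2}c_{12}(n)n^{-1}g_{2,n}$ negligible. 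One needs a rate, $|c_{12}(n)|=|\widehat\eta_{n,1}(\sigma^2)-\varphi|{\cal O}_P(1)+{\cal O}_P(n^{-1/2})$, which the paper extracts from a uniform bound on the mixed second derivative (Lemma \ref{cont} and \eqref{cs}); without it the heavy-tailed score can contaminate the $\sqrt{n}$-scale coordinate.
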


We note that if $\{e_k\}$ and $\{b_k\}$ are independent sequences, then \eqref{stab3} can be replaced with 
$Ee_0^4<\infty$.
\section{Growth of $X_n$}
\label{sec:gro}

We will show in Section \ref{sec:4} (cf. Lemma \ref{lem:4.0}) that
under the conditions of Theorem \ref{thm-new}, $X_n
\overset{P}{\longrightarrow} \infty$. Now we  find the order of
the growth of $X_n$. To state our results we need further
notation. Let
$$\xi_i = \log |\varphi +b_i|,\;\;\;\; S(i) = \xi_1 + \dots + \xi_i\;\;\;\;\mbox{and}\;\;\;\;\gamma_i =
\prod\limits_{1 \leq j \leq i} \sign(\varphi + b_j).
$$
In this section  we consider the case when
\beq\label{b4}
E \log | \varphi + b_0| > 0.
\eeq
\begin{theorem}\label{po} If \eqref{eq:1.2}, \eqref{n1}, \eqref{def} and \eqref{b4} hold, then
$$
e^{-S(n)}\gamma_n X_n{\longrightarrow} X_0+Y\;\;\;\mbox{a.s.}
$$
where
$$
Y=\sum_{1 \leq i < \infty} e^{-S(i )} \gamma_{i } e_i.
$$
\end{theorem}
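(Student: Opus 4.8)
The plan is to solve the recursion \eqref{def} in closed form and thereby reduce the theorem to the almost sure convergence of the defining series for $Y$. Writing $A_k=\varphi+b_k$ and iterating \eqref{def} from the constant $X_0$, one gets
\[
X_n=\Big(\prod_{j=1}^n A_j\Big)X_0+\sum_{i=1}^n\Big(\prod_{j=i+1}^n A_j\Big)e_i .
\]
Since $\prod_{j=1}^n A_j=\gamma_n e^{S(n)}$ and $\gamma_i\in\{-1,1\}$, we have $\prod_{j=i+1}^n A_j=\gamma_n\gamma_i e^{S(n)-S(i)}$, so multiplying through by $e^{-S(n)}\gamma_n$ and using $\gamma_n^2=1$ yields the exact identity
\[
e^{-S(n)}\gamma_n X_n=X_0+\sum_{i=1}^n \gamma_i e^{-S(i)}e_i .
\]
Thus the theorem is equivalent to the assertion that the series $\sum_{i\ge 1}\gamma_i e^{-S(i)}e_i$ converges almost surely with sum $Y$; no approximation error is left over.

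Second, I would establish absolute convergence of this series by a growth-rate (root test) argument, for which two ingredients are needed. First, \eqref{n1} gives $E\log^+|\varphi+b_0|<\infty$, while \eqref{b4} forces $E\log^-|\varphi+b_0|<\infty$ as well (otherwise $E\log|\varphi+b_0|=-\infty$); hence $\mu:=E\log|\varphi+b_0|$ is finite and, by \eqref{b4}, strictly positive, and the strong law of large numbers applied to the i.i.d.\ sequence $\xi_i=\log|\varphi+b_i|$ gives $S(i)/i\to\mu>0$ a.s. Second, I would show $\log^+|e_i|/i\to 0$ a.s.: since $E\log^+|e_0|<\infty$, for every $\varepsilon>0$ one has $\sum_i P(\log^+|e_0|>\varepsilon i)\le \varepsilon^{-1}E\log^+|e_0|<\infty$, so by Borel--Cantelli $\limsup_i \log^+|e_i|/i\le\varepsilon$ a.s.\ for every $\varepsilon$, i.e.\ $\log^+|e_i|/i\to 0$ a.s.

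Third, I would combine the two limits. Using $|e_i|\le e^{\log^+|e_i|}$,
\[
\frac1i\log\big|\gamma_i e^{-S(i)}e_i\big|
=-\frac{S(i)}{i}+\frac{\log|e_i|}{i}
\le -\frac{S(i)}{i}+\frac{\log^+|e_i|}{i}
\longrightarrow -\mu<0\quad\text{a.s.},
\]
so that $\limsup_i \big|\gamma_i e^{-S(i)}e_i\big|^{1/i}\le e^{-\mu}<1$ almost surely. The root test then gives $\sum_{i\ge1}\big|\gamma_i e^{-S(i)}e_i\big|<\infty$ a.s., whence the series converges to $Y$, and the partial-sum identity above yields $e^{-S(n)}\gamma_n X_n\to X_0+Y$ a.s.

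The steps are largely routine; the only point requiring care is the second ingredient, namely that the weak hypothesis $E\log^+|e_0|<\infty$ (rather than any polynomial moment) is exactly enough to force $\log^+|e_i|/i\to0$, and that this, paired with the strictly positive drift $\mu$, is dominated by the geometric decay $e^{-S(i)}$. I would also check that the degenerate terms with $e_i=0$ (where the logarithm is $-\infty$) cause no difficulty, since they merely make the corresponding summands vanish.
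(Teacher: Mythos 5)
Your proof is correct and takes essentially the same route as the paper: the same closed-form identity $e^{-S(n)}\gamma_n X_n = X_0+\sum_{i\le n}\gamma_i e^{-S(i)}e_i$, followed by almost sure absolute convergence of $Y$ obtained from the strong law of large numbers for $S(i)$ together with a subexponential growth bound on $|e_i|$. The only cosmetic difference is that you prove the bound $\log^+|e_i|/i\to 0$ directly by Borel--Cantelli (and explicitly check that $E\log|\varphi+b_0|$ is finite, so the SLLN applies), whereas the paper cites Berkes et al.\ (2003) for the corresponding estimate $|e_i|={\cal O}(e^{ic_1})$ a.s.
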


The random normalization $\exp (-S(n))$ is the correct one in Theorem \ref{po}, if the limit is non--zero with probability one. The next  result provide conditions for 
\beq\label{goal}
P\{Y+X_0\neq 0\}=1.
\eeq
\begin{theorem}\label{nondeg} We assume that \eqref{eq:1.2}, \eqref{n1}, \eqref{def} and \eqref{b4} hold.\\
(i) If
\beq\label{smooth}
P\{(\varphi+b_0)X_0+e_0=c\}=0\;\;\;\mbox{for all}\;\;c,
\eeq
then \eqref{goal} holds. \\
(ii) If
\beq\label{b1}
\{b_k\} \;\;\mbox{and}\;\;\{e_k\}\;\;\mbox{are independent sequences}
\eeq
and
\beq\label{c0}
P\{e_0=c\}<1\;\;\mbox{for all}\;\;c,
\eeq
then \eqref{goal} holds. 
\end{theorem}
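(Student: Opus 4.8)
The plan is to prove Theorem~\ref{nondeg} by analyzing the random variable $Y + X_0 = X_0 + \sum_{i \geq 1} e^{-S(i)}\gamma_i e_i$, whose almost-sure convergence is guaranteed by Theorem~\ref{po}. The key structural observation I would exploit is that $Y$ admits a recursive self-similar decomposition: peeling off the first summand gives
\[
Y + X_0 = X_0 + e^{-\xi_1}\gamma_1 e_1 + e^{-\xi_1}\gamma_1 \sum_{i \geq 2} e^{-(S(i)-\xi_1)}\frac{\gamma_i}{\gamma_1} e_i
= e^{-\xi_1}\gamma_1\bigl((\varphi+b_1)X_0 + e_1 + Y'\bigr),
\]
where $Y' = \sum_{i \geq 2} e^{-(S(i)-S(1))}(\gamma_i/\gamma_1) e_i$ is an independent copy of $Y$ (it is built from $(b_i,e_i)$ for $i \geq 2$ only). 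Since $e^{-\xi_1}\gamma_1 \neq 0$ almost surely, this yields the distributional fixed-point relation
\[
Y + X_0 \overset{d}{=} (\varphi+b_0)X_0 + e_0 + (\varphi+b_0)\widetilde{Y},
\]
with $\widetilde{Y}$ an independent copy of $Y$. This identity is the engine for both parts.

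For part (i), I would argue by contradiction. Suppose $P\{Y + X_0 = 0\} = p > 0$. The idea is that the event $\{Y + X_0 = 0\}$ forces, through the recursion, a coincidence that \eqref{smooth} forbids. Concretely, conditioning on $(b_1,e_1)$ and using independence of $Y'$ from the first coordinate, the equation $(\varphi+b_1)X_0 + e_1 + Y' = 0$ must hold with positive probability; integrating out $Y'$ and invoking \eqref{smooth} — which says the law of $(\varphi+b_0)X_0+e_0$ has no atoms — should drive the probability to zero. The cleanest route is to set $p = \sup_c P\{Y + X_0 = c\}$ and show the fixed-point identity forces $p$ to be attained and then that the atom must sit at a location ruled out by \eqref{smooth}, yielding $p = 0$. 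Here I expect to lean on the fact that $Y + X_0$ and $\widetilde Y$ are independent, so the distribution of the sum $(\varphi+b_0)X_0+e_0+(\varphi+b_0)\widetilde Y$ is a mixture/convolution whose atoms cannot exceed those of $(\varphi+b_0)X_0+e_0$.

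For part (ii), the additional hypothesis \eqref{b1} decouples $\{b_k\}$ from $\{e_k\}$, and \eqref{c0} says $e_0$ is non-degenerate. This case is easier because one can condition on the \emph{entire} $b$-sequence: given $\{b_k\}$, the quantity $Y + X_0 = X_0 + \sum_{i\geq 1} c_i e_i$ is a deterministic constant plus a weighted sum of the independent, non-degenerate random variables $e_i$ with fixed nonzero coefficients $c_i = e^{-S(i)}\gamma_i$ (nonzero since $\varphi + b_i \neq 0$ a.s.\ on the relevant event, as $\xi_i$ is finite). A weighted sum $\sum c_i e_i$ with at least one nonzero coefficient and $e_0$ non-degenerate cannot be almost surely equal to any constant; the simplest justification is that fixing all $e_j$ with $j \neq 1$ and varying $e_1$ over the two distinct values guaranteed by \eqref{c0} already produces two distinct values of the sum. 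Hence $P\{Y + X_0 = 0 \mid \{b_k\}\} = 0$ almost surely, and taking expectations gives \eqref{goal}.

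The main obstacle will be part (i). The difficulty is that without independence of $\{b_k\}$ and $\{e_k\}$, I cannot condition on the $b$-sequence and treat the $e_i$ as freely varying, so I must work entirely through the distributional fixed-point identity and a careful atom-counting argument. The delicate point is ensuring that the convolution with the independent copy $\widetilde Y$ genuinely cannot create or preserve an atom unless \eqref{smooth} is violated — in particular handling the possibility that $\widetilde Y$ itself concentrates mass so as to align atoms of $(\varphi+b_0)\widetilde Y$ with those of $(\varphi+b_0)X_0 + e_0$. I expect the resolution to come from the strict inequality in the atom bound: the largest atom of a convolution of two independent variables is strictly smaller than the product of their largest atomic masses unless one factor is degenerate, which \eqref{smooth} precludes for the $(\varphi+b_0)X_0+e_0$ factor.
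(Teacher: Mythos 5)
Your part~(i) is essentially right, and your closing worry about it is unfounded: once you write $Y+X_0=(\varphi+b_1)^{-1}\bigl((\varphi+b_1)X_0+e_1+Y'\bigr)$ with $Y'$ built from $(b_i,e_i)$, $i\ge 2$, hence independent of $(b_1,e_1)$ by \eqref{eq:1.2} (and $\varphi+b_1\neq 0$ a.s.\ by \eqref{b4}), Fubini gives
\[
P\{Y+X_0=0\}=\int P\{(\varphi+b_1)X_0+e_1=-v\}\,dP_{Y'}(v)=0
\]
directly from \eqref{smooth}: the convolution of an atomless law with any independent law is atomless, so no ``atom alignment'' can occur and no bookkeeping with $p=\sup_c P\{Y+X_0=c\}$ is needed. (The paper in fact gives no explicit proof of part~(i); your decomposition is the natural one and it closes that case completely.)

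The genuine gap is in part~(ii). After conditioning on $\{b_k\}$ you must show $P\{X_0+\sum_i c_ie_i=0\}=0$, but your justification --- fix all $e_j$, $j\neq 1$, and vary $e_1$ over two values allowed by \eqref{c0} --- only shows that the conditional law of the sum is not degenerate, i.e.\ that every atom has mass at most $\sup_c P\{e_0=c\}<1$. ``Not a.s.\ equal to a constant'' yields $P\{\cdot=0\}<1$, not $=0$: a single summand $c_1e_1$ with $P\{e_1=0\}=\tfrac12$ shows that a nonzero-coefficient combination of non-degenerate variables can perfectly well carry an atom, so the conclusion must use that there are \emph{infinitely many} independent non-degenerate terms, and that is a theorem, not a triviality. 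The needed statement is L\'evy's result (L\'evy (1931); Breiman (1968, p.~51)): a convergent series $\sum_i Z_i$ of independent random variables has an atomless distribution provided $\sum_i P\{Z_i\neq a_i\}=\infty$ for \emph{every} sequence of constants $a_i$. This is exactly how the paper argues in Lemma~\ref{levy}: condition on the $\xi$-sequence, note $\gamma_i\neq 0$ a.s., use \eqref{b1} to reduce the divergence condition to $\sum_i\bigl(1-P\{e_0=b_i\}\bigr)=\infty$ for any constants $b_i$ --- which holds because $P\{e_0=b_i\}\to 1$ would force $e_0$ to be constant, contradicting \eqref{c0} --- and then invoke L\'evy's theorem to get $P\{Y=c\}=0$ for all $c$, hence \eqref{goal} since $X_0$ is constant. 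Your conditional setup is the right frame; what is missing is precisely this last, indispensable ingredient (or an equivalent concentration-function argument showing $\sup_c P\{\sum_{i\le n}c_ie_i=c\}\to 0$).
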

The first corollary says that under condition \eqref{b4}, $X_n$ grows exponentially fast with probability one.
\begin{corollary} \label{co:1}
If \eqref{eq:1.2},\eqref{n1}, \eqref{def} \eqref{b4} and  \eqref{smooth} or \eqref{b1} and \eqref{c0} hold,
then
$$
e^{-\tau n}|X_n|{\longrightarrow}\infty \;\;\;\mbox{a.s.   for all } \;0<\tau<E\log|\varphi+b_0|
$$
and
$$
e^{-\tau n}|X_n|{\longrightarrow} 0 \;\;\;\mbox{a.s.   for all } \;\tau>E\log|\varphi+b_0|.
$$
\end{corollary}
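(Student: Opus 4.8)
The plan is to derive Corollary~\ref{co:1} directly from Theorem~\ref{po} together with the non-degeneracy condition \eqref{goal} supplied by Theorem~\ref{nondeg}, by a pathwise argument that isolates the deterministic exponential growth rate of $|X_n|$ from the convergent random factor. First I would observe that under the stated hypotheses (either \eqref{smooth}, or \eqref{b1} together with \eqref{c0}), Theorem~\ref{nondeg} guarantees that $P\{X_0+Y\neq 0\}=1$, and Theorem~\ref{po} gives $e^{-S(n)}\gamma_n X_n\to X_0+Y$ a.s. Since $|\gamma_n|=1$, taking absolute values yields $e^{-S(n)}|X_n|\to |X_0+Y|$ a.s., where the limit is a strictly positive (finite) random variable on a set of probability one.

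Next I would invoke the strong law of large numbers for $S(n)=\xi_1+\dots+\xi_n$ with $\xi_i=\log|\varphi+b_i|$: by \eqref{eq:1.2} the $\xi_i$ are i.i.d., and although $E\log|\varphi+b_0|$ is assumed only to exist as an element of $(0,\infty]$, the SLLN still applies in the extended sense, giving $S(n)/n\to E\log|\varphi+b_0|=:\mu>0$ a.s.\ (where $\mu=+\infty$ is permitted). The factorization $|X_n|=e^{S(n)}\cdot\bigl(e^{-S(n)}|X_n|\bigr)$ then separates the two effects cleanly. For the first claim, fix $\tau<\mu$; write $e^{-\tau n}|X_n|=e^{(S(n)-\tau n)}\bigl(e^{-S(n)}|X_n|\bigr)$, and note that $S(n)-\tau n=n\bigl(S(n)/n-\tau\bigr)\to+\infty$ a.s.\ because $S(n)/n\to\mu>\tau$, while the bracketed factor converges to the a.s.\ positive limit $|X_0+Y|$; hence the product tends to $+\infty$ a.s. For the second claim, fix $\tau>\mu$ (so necessarily $\mu<\infty$); the same factorization gives $S(n)-\tau n\to-\infty$ a.s., so $e^{(S(n)-\tau n)}\to 0$, and multiplying by the bounded-in-the-limit factor $e^{-S(n)}|X_n|\to|X_0+Y|<\infty$ forces $e^{-\tau n}|X_n|\to 0$ a.s.

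The argument is essentially a bookkeeping exercise once Theorems~\ref{po} and~\ref{nondeg} are in hand, so there is no deep obstacle; the only point requiring mild care is the treatment of the boundary between a finite and an infinite mean $\mu$. I would handle $\mu=+\infty$ by noting that in that case the hypothesis $\tau>E\log|\varphi+b_0|$ is vacuous, so only the first (divergence) conclusion is asserted, and there $S(n)/n\to+\infty$ makes $S(n)-\tau n\to+\infty$ for every fixed $\tau$ a fortiori. The second mild point is that on the null set where $X_0+Y=0$ the factorization says nothing, but this set has probability zero by \eqref{goal}, so all the almost-sure statements survive intact. I would therefore conclude by remarking that both limits hold on the full-measure intersection of the convergence event from Theorem~\ref{po}, the non-degeneracy event from Theorem~\ref{nondeg}, and the SLLN event for $S(n)/n$.
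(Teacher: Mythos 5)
Your argument is correct and takes essentially the same approach as the paper: the paper's own proof of Corollary \ref{co:1} is the single remark that it is an immediate consequence of the strong law of large numbers and Theorems \ref{po} and \ref{nondeg}, which is precisely the factorization $e^{-\tau n}|X_n|=e^{S(n)-\tau n}\bigl(e^{-S(n)}|X_n|\bigr)$ that you spell out. The only superfluous step is your handling of $\mu=+\infty$: since \eqref{n1} is among the hypotheses, $E\log^+|\varphi+b_0|<\infty$ forces $E\log|\varphi+b_0|<\infty$, so that case never arises.
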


The second corollary is the asymptotic normality of
$\widehat{\boldsymbol\eta}_n (y)$ without assuming that
$y=\sigma^2$.

\begin{corollary} \label{co:2}
If \eqref{eq:1.2}, \eqref{eq:1.6}--\eqref{def}, \eqref{n3}, \eqref{moment},
\eqref{b4} and \eqref{smooth} or \eqref{b1} and \eqref{c0} hold, then for all $y>0$ the distribution of $ n^{1/2}(\widehat{\boldsymbol\eta}_n (\sigma^2) - (\varphi,\omega^2))$
converges to the bivariate normal distribution  with mean ${\bf 0}$ and covariance
matrix~$\Omega_0$.
\end{corollary}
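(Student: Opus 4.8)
The plan is to deduce the corollary from Theorem~\ref{th:2.4} (the case $y=\sigma^2$) by showing that replacing the nuisance value $\sigma^2$ in the likelihood by an arbitrary $y>0$ moves the estimator by only $o_P(n^{-1/2})$; the assertion that $n^{1/2}(\widehat{\boldsymbol\eta}_n(y)-(\varphi,\omega^2))$ is asymptotically $N({\bf 0},\Omega_0)$ for every $y>0$ then follows from Theorem~\ref{th:2.4} and Slutsky's lemma. Concretely, I will prove
\[
\sqrt{n}\bigl(\widehat{\boldsymbol\eta}_n(y)-\widehat{\boldsymbol\eta}_n(\sigma^2)\bigr)\overset{P}{\longrightarrow}{\bf 0}.
\]
By Theorem~\ref{thm-new} both $\widehat{\boldsymbol\eta}_n(y)$ and $\widehat{\boldsymbol\eta}_n(\sigma^2)$ converge in probability to the interior point $(\varphi,\omega^2)$ of $\Gamma$; hence, with probability tending to one, both lie in a fixed compact neighbourhood $K\subset\Gamma^\circ$ of $(\varphi,\omega^2)$ and satisfy the first--order conditions $\nabla_{\bf z}L_n(\widehat{\boldsymbol\eta}_n(y),y)={\bf 0}$ and $\nabla_{\bf z}L_n(\widehat{\boldsymbol\eta}_n(\sigma^2),\sigma^2)={\bf 0}$, where ${\bf z}=(s,x)$.

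The driving observation is that $y$ enters $L_n$ only through the denominators $xX_{k-1}^2+y$, so the $y$--sensitivity of the score is controlled by the size of $X_{k-1}$. Differencing the $s$--derivative gives
\[
\frac{\partial L_n}{\partial s}({\bf z},y)-\frac{\partial L_n}{\partial s}({\bf z},\sigma^2)=(\sigma^2-y)\sum_{k=1}^{n}\frac{X_{k-1}(X_k-sX_{k-1})}{(xX_{k-1}^2+y)(xX_{k-1}^2+\sigma^2)},
\]
and the analogous identity for $\partial L_n/\partial x$ produces summands of the same type. Uniformly over ${\bf z}\in K$ the $k$--th summand is bounded by a constant multiple of $(1+b_k^2)X_{k-1}^{-2}+e_k^2X_{k-1}^{-4}$. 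Now by Corollary~\ref{co:1}, which applies because \eqref{b4} and the non--degeneracy assumption \eqref{smooth} (or \eqref{b1} and \eqref{c0}) hold and because \eqref{moment} implies \eqref{n1}, we have $|X_{k}|\ge e^{\tau k}$ eventually a.s.\ for any fixed $0<\tau<E\log|\varphi+b_0|$; moreover \eqref{moment} gives $E\log^+|b_0|<\infty$ and $E\log^+|e_0|<\infty$, whence $|b_k|+|e_k|=o(e^{\varepsilon k})$ a.s.\ for every $\varepsilon>0$. Choosing $\varepsilon<\tau$ makes the bounding series summable, so
\[
\sup_{{\bf z}\in K}\bigl|\nabla_{\bf z}L_n({\bf z},y)-\nabla_{\bf z}L_n({\bf z},\sigma^2)\bigr|=O(1)\quad\text{a.s.}
\]

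Evaluating this bound at ${\bf z}=\widehat{\boldsymbol\eta}_n(\sigma^2)\in K$ and invoking its first--order condition at $\sigma^2$ yields $n^{-1/2}\nabla_{\bf z}L_n(\widehat{\boldsymbol\eta}_n(\sigma^2),y)\overset{P}{\longrightarrow}{\bf 0}$. A mean--value expansion of ${\bf z}\mapsto\nabla_{\bf z}L_n({\bf z},y)$ about $\widehat{\boldsymbol\eta}_n(y)$, where the score vanishes, then gives
\[
\sqrt{n}\bigl(\widehat{\boldsymbol\eta}_n(\sigma^2)-\widehat{\boldsymbol\eta}_n(y)\bigr)=\Bigl[\tfrac1n\nabla_{\bf z}^2L_n(\bar{\bf z},y)\Bigr]^{-1}\frac1{\sqrt n}\nabla_{\bf z}L_n(\widehat{\boldsymbol\eta}_n(\sigma^2),y),
\]
with $\bar{\bf z}$ lying between the two estimators, hence $\bar{\bf z}\to(\varphi,\omega^2)$. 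The normalized Hessian $\tfrac1n\nabla_{\bf z}^2L_n(\bar{\bf z},y)$ converges to the same nonrandom invertible limit as in the proof of Theorem~\ref{th:2.4}, since its dependence on $y$ is again only through denominators $xX_{k-1}^2+y$ and is negligible by the exponential growth above. Therefore the right--hand side tends to ${\bf 0}$ in probability, which establishes the displayed reduction and the corollary.

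The main obstacle is the step controlling the score difference uniformly at the \emph{random} argument $\widehat{\boldsymbol\eta}_n(\sigma^2)$ and confirming that the normalized Hessian at the misspecified value $y$ still converges to the invertible limit arising in the proof of Theorem~\ref{th:2.4}. Both reduce to the a.s.\ summability of the per--term bounds, and this is precisely where the genuine exponential growth of $|X_n|$ furnished by Corollary~\ref{co:1}---and hence conditions \eqref{b4} together with \eqref{smooth} (or \eqref{b1} and \eqref{c0})---is indispensable: without it the additive constant $y$ in the denominators would not be asymptotically negligible.
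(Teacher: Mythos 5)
Your proof is correct (and you have read the statement as intended: the displayed $\widehat{\boldsymbol\eta}_n(\sigma^2)$ is a misprint for $\widehat{\boldsymbol\eta}_n(y)$), but it follows a genuinely different route from the paper's. The paper never compares $\widehat{\boldsymbol\eta}_n(y)$ with $\widehat{\boldsymbol\eta}_n(\sigma^2)$: instead it proves Lemma \ref{approx}, which uses Theorem \ref{po} and Lemma \ref{levy} (a.s.\ exponential growth of $|X_n|$ plus a.s.\ convergence of $\sum_k|e_k|e^{-S(k-1)}$) to show that for \emph{every} $y>0$ the score at the true parameter satisfies $|g_{1,n}(y)-\sum_{k\le n}b_k/\omega^2|=O(1)$ and $|g_{2,n}(y)-\sum_{k\le n}(b_k^2-\omega^2)/(2\omega^4)|=O(1)$ a.s., thereby removing the restriction $y=\sigma^2$ that Lemma \ref{lem:4.3} needs for its second component; the proof of Theorem \ref{th:2.4} (CLT for the iid sums, expansion \eqref{eq:4.13}, Hessian limit \eqref{eq:4.14}) is then repeated verbatim with Lemma \ref{approx} in place of Lemma \ref{lem:4.3}. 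You instead keep Theorem \ref{th:2.4} as a black box and prove the reduction $\sqrt n\,\bigl(\widehat{\boldsymbol\eta}_n(y)-\widehat{\boldsymbol\eta}_n(\sigma^2)\bigr)\overset{P}{\longrightarrow}{\bf 0}$, by bounding the $y$-increment of the score uniformly on a compact neighbourhood of $(\varphi,\omega^2)$ and inverting the Hessian. The mechanism is identical in both arguments --- Corollary \ref{co:1}, hence \eqref{b4} together with \eqref{smooth} or with \eqref{b1} and \eqref{c0}, makes series such as $\sum_k(1+b_k^2)X_{k-1}^{-2}$ converge a.s.\ --- but the bookkeeping differs: the paper controls the score at one point for all $y$, you control the score increment in $y$ at all points near the truth. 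The paper's version buys reusability (Lemma \ref{approx} also serves Corollary \ref{co:3}); yours buys economy (no re-derivation of the CLT step) and the more informative by-product that the estimator itself is insensitive to the choice of $y$ to order $o_P(n^{-1/2})$, which would likewise deliver Corollary \ref{co:3}, since $n^{-1/2}=o(a_n/n)$ when $\alpha<2$. Three small repairs: apply the mean value theorem coordinatewise, with two intermediate points $\bar{\bf z}_1,\bar{\bf z}_2$ exactly as in \eqref{eq:4.13}, since $\nabla_{\bf z}L_n$ is vector-valued and a single $\bar{\bf z}$ need not exist; note that summands with $X_{k-1}=0$ vanish identically and that a.s.\ only finitely many $k$ violate $|X_{k-1}|\ge e^{\tau(k-1)}$, so your per-term bound does give an a.s.\ $O(1)$ total; and for the Hessian at $(\bar{\bf z},y)$ the exponential-growth argument is unnecessary, since Lemma \ref{lem:4.5} already gives convergence to $\Omega_{**}$ uniformly over $y\in[y_*,y^*]$, exactly as used in \eqref{eq:4.14}.
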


Similarly, in case of $E\log|\varphi+b_0|>0$, we have the following generalization of Theorem \ref{stable}.

\begin{corollary} \label{co:3}
If \eqref{eq:1.2}, \eqref{eq:1.6}--\eqref{def}, \eqref{n3}, \eqref{stab1}, \eqref{stab3},
\eqref{b4} and \eqref{smooth} or \eqref{b1} and \eqref{c0} hold, then for all $y>0$,
$n^{1/2}(\widehat{\eta}_{n,1}(\sigma^2)-\varphi)$ and 
$ n(\widehat{\eta}_{n,2}(\sigma^2)-\omega^2)/a_n$ are asymptotically independent,  the distribution of $n^{1/2}(\widehat{\eta}_{n,1}(\sigma^2)-\varphi)$ converges to the normal distribution with mean 0 and variance $\omega^2$ and the distribution of $ n(\widehat{\eta}_{n,2}(\sigma^2)-\omega^2)/a_n$ converges to the stable distribution with characteristic function given in \eqref{char}.

\end{corollary}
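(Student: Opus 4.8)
The plan is to deduce Corollary~\ref{co:3} from Theorem~\ref{stable} by showing that, for an arbitrary $y>0$, the maximiser $\widehat{\boldsymbol\eta}_n(y)$ is asymptotically indistinguishable from $\widehat{\boldsymbol\eta}_n(\sigma^2)$, to which Theorem~\ref{stable} applies directly (note that \eqref{b4} implies \eqref{cond}, so all hypotheses of Theorem~\ref{stable} are in force). Writing $w_{k-1}=X_{k-1}^2$ and $u_k=X_k/X_{k-1}-s=(\varphi-s)+b_k+e_k/X_{k-1}$, the two score components are $\partial_s\ell_k=u_kw_{k-1}/(xw_{k-1}+y)$ and $\partial_x\ell_k=-\tfrac12\bigl(w_{k-1}/(xw_{k-1}+y)-u_k^2w_{k-1}^2/(xw_{k-1}+y)^2\bigr)$, so $y$ enters only through the denominators $xw_{k-1}+y$. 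Under \eqref{b4} together with the non-degeneracy assumption (\eqref{smooth}, or \eqref{b1} and \eqref{c0}), Theorems~\ref{po}--\ref{nondeg} and Corollary~\ref{co:1} guarantee $e^{-\tau n}|X_n|\to\infty$ a.s.\ for some $\tau>0$, hence $\sum_k X_{k-1}^{-2}<\infty$ a.s.; this is the mechanism that renders $y$ asymptotically irrelevant.

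First I would record the first-order conditions $\nabla_{\bf z}L_n(\widehat{\boldsymbol\eta}_n(y),y)=0$ and $\nabla_{\bf z}L_n(\widehat{\boldsymbol\eta}_n(\sigma^2),\sigma^2)=0$, valid on an event of probability tending to one because the consistency in Theorem~\ref{thm-new} places both maximisers in the interior of $\Gamma$. Subtracting and inserting a mean-value expansion yields $\widehat{\boldsymbol\eta}_n(y)-\widehat{\boldsymbol\eta}_n(\sigma^2)=-H_n^{-1}D_n$, where $H_n$ is the Hessian $\nabla_{\bf z}^2L_n$ at an intermediate point and $D_n=\nabla_{\bf z}L_n(\widehat{\boldsymbol\eta}_n(\sigma^2),y)-\nabla_{\bf z}L_n(\widehat{\boldsymbol\eta}_n(\sigma^2),\sigma^2)$ isolates the effect of replacing $\sigma^2$ by $y$. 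A direct computation shows that, up to the factor $\sigma^2-y$, each summand of $D_n$ is bounded by a constant times $(1+u_k^2)/X_{k-1}^2$; since $X_{k-1}^2$ grows exponentially while, conditionally on the past, $Eb_k^2=\omega^2<\infty$ and $Ee_k^2<\infty$, a conditional Borel--Cantelli argument gives $\sum_k(1+b_k^2)/X_{k-1}^2<\infty$ and $\sum_k e_k^2/X_{k-1}^4<\infty$ a.s., so that $D_n$ converges a.s.\ to a finite random vector and in particular $D_n=O_P(1)$.

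Next I would control $H_n$. Its entries involve $b_k^2$ only linearly and through denominators $xw_{k-1}+y$ that behave like $xw_{k-1}$; because $Eb_0^2=\omega^2<\infty$, the ordinary law of large numbers (not the stable limit \eqref{stab2}) applies, and $n^{-1}H_n$ converges, uniformly over $\Gamma$, to the deterministic negative-definite matrix with diagonal entries $-1/\omega^2$ and $-1/(2\omega^4)$ and vanishing off-diagonal entry, the latter because $Eb_0=0$. Combined with $D_n=O_P(1)$ this gives $\widehat{\boldsymbol\eta}_n(y)-\widehat{\boldsymbol\eta}_n(\sigma^2)=O_P(1/n)$. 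Consequently $n^{1/2}(\widehat\eta_{n,1}(y)-\widehat\eta_{n,1}(\sigma^2))=O_P(n^{-1/2})\to0$ and $(n/a_n)(\widehat\eta_{n,2}(y)-\widehat\eta_{n,2}(\sigma^2))=O_P(a_n^{-1})\to0$ since $a_n\to\infty$. An application of Slutsky's theorem then transports the joint limit law of Theorem~\ref{stable}---asymptotic independence together with the $N(0,\omega^2)$ law and the stable law \eqref{char}---from $y=\sigma^2$ to arbitrary $y>0$, which is the assertion of Corollary~\ref{co:3}.

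I expect the main obstacle to be the two estimates feeding the perturbation bound: the a.s.\ convergence of the increment $D_n$ and the uniform-over-$\Gamma$ law of large numbers for $n^{-1}H_n$. Both rest on the exponential growth $e^{-\tau n}|X_n|\to\infty$ supplied by Corollary~\ref{co:1}, which is precisely why the strict inequality \eqref{b4} and the non-degeneracy conditions---absent from Theorem~\ref{stable}---are required here; under the weaker \eqref{cond} the series $\sum_k X_{k-1}^{-2}$ need not converge and the equivalence argument breaks down.
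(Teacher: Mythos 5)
Your proof is correct, but it takes a genuinely different route from the paper's. The paper proves Corollary \ref{co:3} by rerunning the proof of Theorem \ref{stable} with a single substitution: the score-approximation Lemma \ref{lem:4.6}, whose second half is valid only at $y=\sigma^2$, is replaced by Lemma \ref{approx}, which under \eqref{b4} and the non-degeneracy conditions gives $g_{1,n}(y)=\sum_{k} b_k/\omega^2+{\cal O}(1)$ and $g_{2,n}(y)=\sum_{k}(b_k^2-\omega^2)/(2\omega^4)+{\cal O}(1)$ almost surely for \emph{every} $y>0$; the representations \eqref{first} and \eqref{second} then hold with $y$ in place of $\sigma^2$, and the Meerschaert--Scheffler independence argument is unchanged. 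You instead treat Theorem \ref{stable} as a black box at $y=\sigma^2$ and prove the perturbation bound $\widehat{\boldsymbol\eta}_n(y)-\widehat{\boldsymbol\eta}_n(\sigma^2)={\cal O}_P(1/n)$, which is absorbed by both normalizations $n^{1/2}$ and $n/a_n$ via Slutsky's theorem. Both arguments rest on exactly the same mechanism --- the a.s.\ exponential growth of $|X_n|$ supplied by Theorem \ref{po}, Lemma \ref{levy} and Corollary \ref{co:1}, which makes series such as $\sum_k (1+b_k^2)/X_{k-1}^2$ converge a.s.\ --- and you correctly identify that this is precisely where \eqref{b4} together with \eqref{smooth}, or \eqref{b1} and \eqref{c0}, enters and why the argument fails under the weaker \eqref{cond}. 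What your route buys is modularity and an explicit quantification of the estimator's insensitivity to the choice of $y$ (the ${\cal O}_P(1/n)$ proximity is of independent interest); what the paper's route buys is that the a.s.\ score representations of Lemma \ref{approx} are proved once and shared with the proof of Corollary \ref{co:2}, and its error terms are a.s.\ bounded rather than merely $o_P$. Three small points to tidy: the corollary as printed writes $\widehat{\eta}_{n,j}(\sigma^2)$, but the intended (and your proven) statement concerns $\widehat{\eta}_{n,j}(y)$ for arbitrary $y>0$; since the summands of $D_n$ are evaluated at $\widehat{\boldsymbol\eta}_n(\sigma^2)$, which moves with $n$, you should say $D_n$ is a.s.\ bounded uniformly in $n$ (your summand bound being uniform over ${\bf z}\in\Gamma$) rather than that it ``converges a.s.''; and, exactly as in the paper's own \eqref{eq:4.13}, your first-order conditions implicitly require $(\varphi,\omega^2)$ to lie in the interior of $\Gamma$.
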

\section{Proofs of Theorems \ref{thm-new}--\ref{stable}}
\label{sec:4}

The proofs will use the following result:

\begin{lemma}\label{lem:4.0} If \eqref{eq:1.2} and \eqref{eq:1.6}--\eqref{cond}  hold,
 then
\beq |X_n| \overset{P}{\longrightarrow} \infty, \label{eq:4.1}
\eeq
\end{lemma}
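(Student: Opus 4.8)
The plan is to replace $X_n$ by a distributionally equivalent sum that exposes the growth mechanism. Unrolling the recursion \eqref{def} gives
\[
X_n=\Big(\prod_{m=1}^{n}(\varphi+b_m)\Big)X_0+\sum_{k=1}^{n}\Big(\prod_{m=k+1}^{n}(\varphi+b_m)\Big)e_k ,
\]
and, since $(b_1,e_1),\dots,(b_n,e_n)$ are i.i.d.\ and reversing their order leaves the joint law unchanged, for each fixed $n$
\[
X_n\overset{d}{=}R_n:=P_nX_0+\sum_{j=0}^{n-1}P_je_{j+1},\qquad P_j=\prod_{m=1}^{j}(\varphi+b_m)=\gamma_je^{S(j)} .
\]
So it suffices to show $P(|R_n|\le K)\to 0$ for every $K>0$. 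The reversed form is the useful one for two reasons: $\{P_je_{j+1}\}$ is a martingale difference sequence for $\mathcal F_j=\sigma((b_i,e_i),i\le j)$ (because $P_j\in\mathcal F_j$, $Ee_{j+1}=0$ and $(b_{j+1},e_{j+1})$ is independent of $\mathcal F_j$), and, conditionally on $\mathcal B=\sigma(b_1,b_2,\dots)$, the summands are independent with $P_j$ known. The coefficient multiplying the fresh noise $e_{j+1}$ is $P_j$, which is independent of $e_{j+1}$; this independence is exactly what the reversed representation provides and what the direct solution lacks.

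Next I locate the growth. By Jensen's inequality and \eqref{cond}, $0\le E\log|\varphi+b_0|\le\tfrac12\log(\varphi^2+\omega^2)$, so $\varphi^2+\omega^2\ge1$, and the recursion $EX_n^2=(\varphi^2+\omega^2)EX_{n-1}^2+\sigma^2$ forces $EX_n^2\to\infty$; this confirms growth in mean but is far too weak for convergence in probability. The genuine mechanism is carried by $M_n:=\max_{0\le j\le n-1}S(j)$, since the largest among the coefficients $P_j$ has modulus $e^{M_n}$. If $E\log|\varphi+b_0|>0$, the strong law gives $S(n)\to\infty$ a.s., hence $M_n\to\infty$; if $E\log|\varphi+b_0|=0$ and $\xi_1$ is non-degenerate (which holds when $\omega^2>0$ unless $|\varphi+b_0|$ is a.s.\ constant), then $S$ is a recurrent mean-zero random walk and $\max_{j\le n}S(j)\to\infty$ in probability by fluctuation theory. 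In the remaining degenerate case $\xi_1\equiv0$ one has $|P_j|\equiv1$, so $R_n$ is literally a mean-zero random walk with increment variance $\sigma^2>0$ and $|R_n|\to\infty$ in probability follows from the central limit theorem. In every case, then, a coefficient of size $e^{M_n}\to\infty$ multiplies a fresh, genuinely spread noise term.

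The crux is to turn ``a large coefficient times spread noise'' into anti-concentration of $R_n$, with no smoothness assumed on $(b_0,e_0)$. Conditioning on $\mathcal B$ makes the summands independent, so peeling each pair through its characteristic function yields $|E[e^{itR_n}\mid\mathcal B]|\le\prod_{j=0}^{n-1}\bigl|\phi_{e_0\mid b_{j+1}}(tP_j)\bigr|$, and via an Esseen-type bound $Q_{R_n}(K)\le CK\int_{-1/K}^{1/K}|E e^{itR_n}|\,dt$ one wants the product to be small. Because $\sigma^2>0$, the conditional law $\mathcal L(e_0\mid b_0=b)$ is non-degenerate for $b$ in a set $G$ of positive probability, and for such $b$ one gets $|\phi_{e_0\mid b}(tP_j)|\le 1-c$ on the relevant range of $t$; since $\{b_{j+1}\in G\}$ is independent of $S(j)$, a positive fraction of the $\asymp$-many high-lying indices $j$ (those with $S(j)$ large, whose count diverges in probability by the fluctuation/occupation-time estimates above) contribute a genuine factor of decay, driving $Q_{R_n}(K)\to0$. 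I expect this last step to be the main work, and specifically the critical regime $E\log|\varphi+b_0|=0$ to be the hard part: one must organise the hugely unequal random coefficients so that a concentration inequality applies, quantify that enough ``effective'' high-lying indices survive, control the lattice/atom revivals of the conditional characteristic functions, and dispose of the corner where $e_0$ is a.s.\ a function of $b_0$ (so that conditioning on $\mathcal B$ removes all randomness and the required spread must instead be extracted from the non-degeneracy $\omega^2>0$ of $b_0$). By contrast the explosive case $E\log|\varphi+b_0|>0$ is comparatively routine, since Theorem \ref{po} already localises $X_n$ on the scale $e^{S(n)}$ with $S(n)\to\infty$.
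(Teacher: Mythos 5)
Your reduction to the reversed sum $R_n=P_nX_0+\sum_{j=0}^{n-1}P_je_{j+1}$ and the divergence of $M_n=\max_{0\le j\le n-1}S(j)$ are sound, but the proposal stops exactly where the difficulty begins, and you say so yourself: the anti-concentration step, turning ``one huge coefficient $e^{M_n}$ times spread noise'' into $P\{|R_n|\le K\}\to0$, is left as a programme with acknowledged open issues (the critical regime $E\log|\varphi+b_0|=0$, lattice/atom revivals of conditional characteristic functions, the corner where $e_0$ is a.s.\ a function of $b_0$). That step is the entire content of the lemma; in effect you propose to re-derive from scratch the divergence theory of perpetuities, which is precisely what the paper avoids: its proof consists of a one-line non-degeneracy check followed by a citation of Remark 2.8 and Corollary 4.1 of Goldie and Maller (2000), where all the work you defer has been carried out.

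Moreover, the corner you postpone is not a technicality: it is where the statement actually fails if one drops hypothesis \eqref{n2}, and your argument never uses \eqref{n2} except in the second-moment remark you yourself discard as too weak. Take $\varphi=1$, $b_0=\pm\sqrt2$ with probability $\tfrac12$ each, $e_0=-cb_0$ and $X_0=c$ with $c\neq0$. Then \eqref{eq:1.2}, \eqref{eq:1.6}, \eqref{eq:1.7}, \eqref{def} and \eqref{cond} all hold (indeed $E\log|1+b_0|=\tfrac12\log\bigl((\sqrt2+1)(\sqrt2-1)\bigr)=0$ and $\sigma^2=2c^2>0$), yet $X_k=(1+b_k)c-cb_k=c$ for every $k$, so $|X_n|\equiv|c|$; only $\mathrm{cov}(b_0,e_0)=-2c\neq0$ is violated. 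Hence any valid proof must invoke \eqref{n2} to exclude fixed points of the random map, and your scheme has no place where it enters: in this corner, conditionally on $\mathcal B$ the sum $R_n$ is deterministic, and no spread can be ``extracted from $\omega^2>0$''. This is exactly what the paper's short computation accomplishes: if $e_0+c(\varphi+b_0)=c$ a.s., multiplying by $e_0$ and taking expectations gives $Ee_0^2=0$ (using $Ee_0=Eb_0e_0=0$), contradicting $\sigma^2>0$; together with $P\{\varphi+b_0=0\}=0$ (forced by \eqref{cond}) this is the non-degeneracy required to apply Goldie--Maller. Incidentally, your ``routine'' disposal of the case $E\log|\varphi+b_0|>0$ via Theorem \ref{po} leaks in the same way: it needs $P\{X_0+Y\neq0\}=1$, i.e.\ Theorem \ref{nondeg}, whose hypotheses (\eqref{smooth}, or \eqref{b1} and \eqref{c0}) are not available under the assumptions of this lemma.
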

\begin{proof}
We note that 
$$
P\{e_0+c(\varphi +b_0)=c\}<1\;\;\;\mbox{for all}\;\;c.
$$
Indeed, if $e_0+c(\varphi +b_0)=c$ with probability one, then multiplying this equation with $e_0$ and taking expected values we get $Ee_0^2+c\varphi Ee_0+ cEb_0e_0=cEe_0$. Since $Ee_0=Ee_0b_0=0$, we get $Ee_0^2=0$, which contradicts $Ee_0^2=\sigma^2>0$ (c.f.\ \eqref{eq:1.7}).
Since \eqref{cond} implies $ P\{\varphi+b_0=0\}=0$, the result follows immediately from Remark 2.8 and Corollary 4.1  of Goldie and Maller (2000).
\end{proof}

We start with the study of the log likelihood function.

\begin{lemma}
\label{lem:4.1} If \eqref{eq:1.2}, \eqref{n1} and \eqref{eq:1.6}--\eqref{cond}  are satisfied,
 then
\beq \sup_{{\bf u} \in \Gamma^*} \left| \frac1n
\bigl(L_n({\bf u}) - L_n({\bf \btheta}) \bigr) - f(s, x) \right|
\overset{P}{\longrightarrow} 0, \label{eq:4.2}
\eeq
where
$f(\cdot)$ is defined in \eqref{eq:1.8} and
\[
\Gamma^* = \bigl\{ {\bf u} = (s,x,y) : \ s_* \leq s \leq s^*, \
x_* \leq x \leq x^*, \ y_* \leq y \leq y^* \bigr\},
\]
with $0 < x_*$ and  $0 < y_*$.
\end{lemma}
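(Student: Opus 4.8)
The plan is to establish uniform convergence of the normalized log-likelihood difference by first working out the pointwise limit and then upgrading to uniformity via an equicontinuity/compactness argument. Writing out $\ell_k(\bu) - \ell_k(\btheta)$ explicitly, the key observation is that because $|X_{k-1}| \to \infty$ in probability (Lemma~\ref{lem:4.0}), the terms $y$ and $\sigma^2$ become asymptotically negligible relative to $xX^2_{k-1}$ and $\omega^2 X^2_{k-1}$ respectively. Concretely, $\log(xX^2_{k-1}+y) - \log(\omega^2 X^2_{k-1}+\sigma^2) = \log(x/\omega^2) + o(1)$ once $X_{k-1}$ is large, which is where the $y$-dependence disappears and the $\log(\omega^2/x)$ term in $f(s,x)$ emerges. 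The quadratic parts require rewriting $X_k - sX_{k-1} = (\varphi - s)X_{k-1} + b_k X_{k-1} + e_k$, so that
\[
\frac{(X_k - sX_{k-1})^2}{xX^2_{k-1}+y} = \frac{\bigl((\varphi - s) + b_k + e_k/X_{k-1}\bigr)^2}{x + y/X^2_{k-1}},
\]
and comparing with the analogous expression at $\btheta$. The $e_k/X_{k-1}$ contributions vanish, leaving a ratio whose conditional expectation given the past produces the $(\varphi-s)^2/x$, $\omega^2/x$, and $-1$ terms in $f$.

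\textbf{The central step} is a law of large numbers for the resulting sum. After the reductions above, $\frac1n\bigl(L_n(\bu)-L_n(\btheta)\bigr)$ is a normalized sum of terms that are, up to negligible remainders, functions of the i.i.d.\ increments $(b_k,e_k)$ together with a factor $1/X^2_{k-1}$ that tends to zero. I would split each summand into a leading deterministic piece (whose average converges to $f(s,x)$ by the ordinary strong law applied to functions of $(b_k, e_k)$, using $Eb_k = 0$, $Eb_k^2 = \omega^2$) and a remainder controlled by the smallness of $1/X^2_{k-1}$. To make the remainder manageable I would fix a threshold $T$, treat indices where $|X_{k-1}| \le T$ separately (their proportion is small because $|X_n|\to\infty$ in probability, though this needs care since convergence in probability of $X_n$ does not immediately bound the \emph{fraction} of small $|X_{k-1}|$), and show the complementary indices contribute the limit $f(s,x)$ plus a vanishing error.

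\textbf{The main obstacle} I anticipate is precisely this: passing from $|X_n| \overset{P}{\to}\infty$ to control of the empirical averages $\frac1n\sum_{k=1}^n g(X_{k-1}, b_k, e_k)$, since convergence in probability of a single $X_n$ is far weaker than an almost-sure or Ces\`aro statement about the whole trajectory, and the $X_{k-1}$ are strongly dependent. The natural remedy is to invoke the growth result that $X_n$ grows geometrically (Corollary~\ref{co:1} gives $e^{-\tau n}|X_n|\to\infty$ a.s.\ under the strict inequality, and the borderline case $E\log|\varphi+b_0|=0$ must be handled by a separate argument), so that only $O(\log n)$ of the indices can have $|X_{k-1}|$ below any fixed threshold; this forces the ``small'' indices to contribute a vanishing fraction. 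Once pointwise convergence on $\Gamma^*$ is secured, uniformity follows routinely: the summands $\ell_k(\bu)-\ell_k(\btheta)$ are smooth in $\bu$ with gradients that can be uniformly bounded in probability over the compact set $\Gamma^*$ (here $x_* > 0$ and $y_* > 0$ keep the denominators bounded away from zero), so a standard stochastic equicontinuity argument together with a finite-net covering of $\Gamma^*$ upgrades pointwise convergence to the uniform statement~\eqref{eq:4.2}.
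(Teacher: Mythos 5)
There is a genuine gap in your central step. You correctly identify the key obstacle --- that $|X_n|\overset{P}{\longrightarrow}\infty$ does not by itself control the \emph{fraction} of indices $k\leq n$ with $|X_{k-1}|$ below a threshold --- but your proposed remedy does not work under the hypotheses of the lemma. You invoke the almost sure geometric growth of $|X_n|$ (Corollary \ref{co:1}), but that result requires the strict inequality $E\log|\varphi+b_0|>0$ \emph{and} the extra non-degeneracy conditions \eqref{smooth}, or \eqref{b1} together with \eqref{c0}, none of which are assumed in Lemma \ref{lem:4.1}; the lemma must also cover the boundary case $E\log|\varphi+b_0|=0$ (condition \eqref{cond} is only $\geq 0$), for which you offer no argument beyond saying one is needed. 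So the load-bearing step of your proof is missing exactly where the lemma is hardest.

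The point you are missing is that no trajectory-wise (almost sure or growth-rate) statement is needed at all: the obstacle dissolves if you work with expectations. Every error term in the decomposition is, after factoring out quantities like $E|b_ke_k|$ by the independence of $(b_k,e_k)$ and $X_{k-1}$, a Ces\`aro average of nonnegative terms of the form $h(X_{k-1})$ with $h$ \emph{bounded} and $h(t)\to 0$ as $|t|\to\infty$ (e.g.\ $h(t)=1/(x_*t^2+y_*)$ or $|t|/(x_*t^2+y_*)$). Since $h$ is bounded and $|X_n|\overset{P}{\longrightarrow}\infty$, the bounded convergence theorem gives $Eh(X_n)\to 0$, hence $\frac1n\sum_{1\leq k\leq n}Eh(X_{k-1})\to 0$ by Ces\`aro, and Markov's inequality then yields $\frac1n\sum_{1\leq k\leq n}h(X_{k-1})\overset{P}{\longrightarrow}0$. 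This is the paper's argument; it uses only Lemma \ref{lem:4.0} and works uniformly in $\bu\in\Gamma^*$ because each such bound can be taken with the worst-case constants $x_*,y_*,x^*,y^*$, so the supremum over $\Gamma^*$ sits inside a sum that does not depend on $\bu$ --- which also makes your stochastic-equicontinuity/finite-net step unnecessary. The remaining main terms are plain i.i.d.\ averages such as $\frac1n\sum_{1\leq k\leq n}(b_k^2-\omega^2)$ and $\frac1n\sum_{1\leq k\leq n}b_k$, handled by the weak law of large numbers, as in your sketch.
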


\begin{proof}
We write
\begin{align}
L_n ({\bf u}) - L_n ({\btheta}) = \frac12 &\sum_{1 \leq k \leq
n} \log \frac{\omega^2 X^2_{k -
1} + \sigma^2}{x X^2_{k - 1} + y} \label{p0}
\vspace{.3 cm}\\
& +  \frac12 \sum_{1 \leq k \leq n} \frac{(X_{k - 1} b_k
+ e_k)^2}{\omega^2 X^2_{k - 1} + \sigma^2} - \frac12 \sum_{1 \leq k \leq n}\frac{\bigl((\varphi
- s) X_{k - 1} + X_{k - 1} b_k + e_k\bigr)^2}{x X^2_{k - 1} + y}
.\notag
\end{align}
Using the mean  value theorem we conclude
\begin{align*}
\left|\log \frac{\omega^2 X^2_{k - 1} + \sigma^2}{x X^2_{k - 1} +
y} - \log \frac{\omega^2}{x} \right| &\leq c_1
\left(\frac{x^*}{\omega^2} + \frac{x^* X^2_{k - 1} + y^*}{\omega^2
X^2_{k - 1} + \sigma^2} \right) \frac1{x_* X^2_{k
- 1} + y_*} \\
&\leq c_2 \frac1{x_* X^2_{k - 1} + y_*} .
\end{align*}
By \eqref{eq:4.1} we have that
\beq\label{p1}
E \frac1{x_* X^2_n + y_*} \to 0
\eeq
and therefore by the Markov inequality
\beq\label{p1/2}
\frac1n \sum_{1 \leq k \leq n} \sup_{{\bf u} \in \Gamma^*} \left|
\log \frac{\omega^2 X^2_{k - 1} + \sigma^2}{x X^2_{k - 1} + y} -
\log \frac{\omega^2}{x} \right| \overset{P}{\longrightarrow} 0.
\eeq
Also,
\begin{align*}
\sum_{1 \leq k \leq n}& \left\{ \frac{(X_{k - 1} b_k +
e_k)^2}{\omega^2 X^2_{k - 1} + \sigma^2} - 1 \right\}\\
&= \sum_{1 \leq k \leq n} (b^2_k - \omega^2) \frac{X^2_{k -
1}}{\omega^2 X^2_{k - 1} + \sigma^2} + \sum_{1 \leq k \leq n}
\frac{e^2_k}{\omega^2 X^2_{k - 1} +
\sigma^2} \\
&\quad + \sum_{1 \leq k \leq n} b_k e_k \frac{2X_{k - 1}}{\omega^2
X^2_{k - 1} + \sigma^2} - \sum_{1 \leq k \leq n}
\frac{\sigma^2}{\omega^2 X^2_{k - 1} + \sigma^2} .
\end{align*}
Similarly to \eqref{p1} we obtain
\[
\frac1n \sum_{1 \leq k \leq n} \frac{\sigma^2}{\omega^2 X^2_{k -
1} + \sigma^2} \overset{P}{\longrightarrow} 0
\]
Since by \eqref{eq:4.1} and the independence of $e_n$ and $X_{n -
1}$ we have
\[
 E \frac{e^2_n}{\omega^2 X^2_{n - 1} + \sigma^2} \to 0,
\]
 thus we get
\[
\frac1n \sum_{1 \leq k \leq n} \frac{e^2_k}{\omega^2 X^2_{k - 1} +
\sigma^2} \overset{P}{\longrightarrow} 0.
\]
Now we write
\begin{align*}
\sum_{1 \leq k \leq n} (b^2_k - \omega^2) \frac{X^2_{k -
1}}{\omega^2 X^2_{k - 1} +\sigma^2}=
\sum_{1 \leq k \leq n} (b^2_k& - \omega^2)\frac{1}{\omega^2}-
\sum_{1 \leq k \leq n} b^2_k \frac{\sigma^2}{\omega^2(\omega^2 X^2_{k - 1} +\sigma^2)}
\vspace{.3 cm}\\
&+\sum_{1 \leq k \leq n}  \frac{\omega^2\sigma^2}{\omega^2(\omega^2 X^2_{k - 1} +\sigma^2)}.
\end{align*}
The weak law of large numbers yields
\beq\label{p2}
\frac{1}{n}\sum_{1 \leq k \leq n} (b^2_k - \omega^2)\frac{1}{\omega^2}
\overset{P}{\longrightarrow} 0.
\eeq
Using now the independence of $b_k$ and $X_{k-1}$ with \eqref{eq:4.1} we obtain
\[
E\frac{1}{n}\sum_{1 \leq k \leq n} b^2_k \frac{\sigma^2}{\omega^2(\omega^2 X^2_{k - 1} +\sigma^2)}\to 0\;\;\mbox{and}\;\;E\frac{1}{n}\sum_{1 \leq k \leq n}  \frac{\omega^2\sigma^2}{\omega^2(\omega^2 X^2_{k - 1} +\sigma^2)}\to 0
\]
and therefore by the Markov inequality and \eqref{p2} we conclude
\beq
\frac{1}{n}\sum_{1 \leq k \leq n} (b^2_k - \omega^2) \frac{X^2_{k -
1}}{\omega^2 X^2_{k - 1} + \sigma^2}\overset{P}{\longrightarrow} 0.
\eeq
By the independence of $(b_k,e_k)$ and $X_{k-1}$ we get
\[
E\left|\frac1n \sum_{1\leq k \leq n} b_k e_k \frac{X_{k - 1}}{\omega^2
X^2_{k - 1} + \sigma^2}\right|
\leq
\frac1n \sum_{1\leq k \leq n} E|b_k e_k| E\left|\frac{X_{k - 1}}{\omega^2
X^2_{k - 1} + \sigma^2}\right|
 \to 0
\]
on account of \eqref{eq:4.1}, resulting in
\[
\frac1n \sum_{1\leq k \leq n} b_k e_k \frac{X_{k - 1}}{\omega^2
X^2_{k - 1} + \sigma^2}\overset{P}{\longrightarrow} 0.
\]
Hence we proved that
\beq\label{e2}
\frac{1}{n}\sum_{1 \leq k \leq n} \left\{ \frac{(X_{k - 1} b_k +
e_k)^2}{\omega^2 X^2_{k - 1} + \sigma^2} - 1 \right\}\overset{P}{\longrightarrow} 0.
\eeq

Next we write
\begin{align*}
&\frac{\bigl(X_{k - 1}(\varphi - s) + X_{k - 1} b_k +
e_k\bigr)^2}{x X^2_{k - 1} + y} \\
&\quad = (\varphi - s)^2 \frac{X^2_{k - 1}}{x X^2_{k - 1} + y} +
b^2_k \frac{X^2_{k - 1}}{x X^2_{k - 1} + y} + e^2_k \frac1{x
X^2_{k -
1} + y}\\
&\qquad + 2(\varphi - s) b_k \frac{X^2_{k - 1}}{x X^2_{k - 1} + y}
+ 2(\varphi - s) e_k \frac{X_{k - 1}}{x X^2_{k - 1} + y} + 2b_k
e_k \frac{X_{k - 1}}{x X^2_{k - 1} + y}.
\end{align*}
Clearly,
\begin{align*}
\noindent  E \sup_{{\bf u} \in \Gamma^*} \biggl| \sum_{1 \leq k \leq n} e_k
b_k \frac{X_{k - 1}}{x X^2_{k - 1} + y}\biggr| &\leq \sum_{1 \leq
k \leq n} E \biggl|b_k e_k
\frac{X_{k - 1}}{x_* X^2_{k - 1} + y_*}\biggr|\\
& = E|e_0 b_0| \sum_{1\leq k \leq n} E \frac{|X_{k -
1}|}{x_* X^2_{k - 1} + y_*}
\end{align*}
and since by \eqref{eq:4.1}
\[
E \frac{|X_n|}{x_* X^2_n + y_*} \to 0,
\]
 the Markov inequality yields
\[
\sup_{{\bf u} \in \Gamma^*} \biggl| \frac1n \sum_{1 \leq k \leq n}
e_k b_k \frac{X_{k - 1}}{x X^2_{k - 1} + y} \biggr|
\overset{P}{\longrightarrow} 0.
\]
Similar arguments give
\[
\sup_{{\bf u} \in \Gamma^*} \biggl| \frac1n |\varphi - s| \sum_{1
\leq k \leq n} e_k \frac{X_{k - 1}}{x X^2_{k - 1} + y} \biggr|
\overset{P}{\longrightarrow} 0.
\]
Next we observe that
\begin{align*}
\sup_{{\bf u} \in \Gamma^*} \biggl| \sum_{1 \leq k \leq n} b_k
\frac{X^2_{k - 1}}{x X^2_{k - 1} + y} \biggr|
& \leq \sup_{{\bf u} \in \Gamma^*} \frac1x \biggl|\sum_{1
\leq k \leq n} b_k \biggr| + \sup_{{\bf u} \in \Gamma^*} \biggl|
\sum_{1 \leq k \leq n} b_k \frac1x \frac{y}{x X^2_{k - 1} + y}
\biggr|
\vspace{.3 cm}\\
& \leq \frac1{x_*} \biggl| \sum_{1 \leq k \leq n} b_k\biggr|
+ \frac{y^*}{x_*} \sum_{1 \leq k \leq n} |b_k| \frac1{x_* X^2_{k -
1} + y_*} .
\end{align*}
By the law of large numbers we have
\[
\frac1n \sum_{1 \leq k \leq n} b_k \overset{P}{\longrightarrow} 0
\]
and the Markov inequality with \eqref{eq:4.1} gives
\[
\frac1n \sum_{1 \leq k \leq n} |b_k| \frac1{x_* X^2_{k - 1} + y_*}
\overset{P}{\longrightarrow} 0.
\]
Similarly,
\[
\sup_{{\bf u} \in \Gamma^*} \frac1n \biggl| \sum_{1 \leq k \leq n}
e^2_k \frac1{x X^2_{k - 1} + y}\biggr| \leq \frac1n \sum_{1 \leq k
\leq n} e^2_k \frac1{x_* X^2_{k - 1} + y_*}
\overset{P}{\longrightarrow} 0.
\]
Now,
\[
b^2_k \frac{X^2_{k - 1}}{x X^2_{k - 1} + y} - \frac{\omega^2}{x} =
-b^2_k \frac{y}{x(x X^2_{k - 1} + y)} + \frac1x (b^2_k -
\omega^2),
\]
and therefore, arguing as above, we get
\begin{align*}
&\sup_{{\bf u} \in \Gamma^*} \frac1n \biggl| \sum_{1 \leq k \leq
n} \biggl(b^2_k \frac{X^2_{k - 1}}{x X^2_{k - 1} + y} -
\frac{\omega^2}{x} \biggr) \biggr| \\
&\qquad \leq \frac1{x_*} \frac{y^*}{n }\sum_{1 \leq k \leq n} b^2_k
\frac1{x_* X^2_{k - 1} + y_*} + \frac1{x_*} \biggl| \sum_{1 \leq k
\leq n} (b^2_k - \omega^2) \biggr| \overset{P}{\longrightarrow} 0.
\end{align*}

Similarly,
\[
\sup_{{\bf u} \in \Gamma^*} \biggl| \frac1n \sum_{1 \leq k \leq n}
\biggl( \frac{X^2_{k - 1}}{x X^2_{k - 1} + y} - \frac1x \biggr)
\biggr| \overset{P}{\longrightarrow} 0.
\]
Thus we proved 
\beq\label{p5}
\sup_{{\bf u} \in \Gamma^*}\left|\frac{1}{n}\sum_{1\leq k \leq n}\frac{\bigl((\varphi
- s) X_{k - 1} + X_{k - 1} b_k + e_k\bigr)^2}{x X^2_{k - 1} + y}-\left(\frac{(\varphi-s)^2}{x}+\frac{\omega^2}{x}\right)\right|\overset{P}{\longrightarrow} 0. 
\eeq
The result in Lemma \ref{lem:4.1} follows from \eqref{p0}, \eqref{p1/2},\eqref{e2} and \eqref{p5}.
\end{proof}

\begin{lemma}
\label{lem:4.2} If the conditions of Lemma~\ref{lem:4.1} are
satisfied and ${\boldsymbol \eta} = (\varphi, \omega^2) \in
\Gamma$, then
\[
\sup_{y_* \leq y \leq y^*} \bigl| \widehat {\boldsymbol \eta}_n(y)
- {\boldsymbol \eta} \bigr| \overset{P}{\longrightarrow} 0\;\;\;\mbox{for all}\;\;0<y_*<y^*.
\]
\end{lemma}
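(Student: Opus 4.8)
The plan is to obtain Lemma~\ref{lem:4.2} from the uniform convergence established in Lemma~\ref{lem:4.1} by the standard argmax argument for consistency of an extremum estimator, the decisive analytic ingredient being that the limit $f$ in \eqref{eq:1.8} has a unique maximizer, namely $\boldsymbol{\eta}=(\varphi,\omega^2)$.

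First I would verify this maximization property. For fixed $x>0$ the term $-(\varphi-s)^2/x$ shows that $s\mapsto f(s,x)$ is maximized at $s=\varphi$. Setting $s=\varphi$ and writing $t=\omega^2/x$ gives $f(\varphi,x)=\tfrac12(\log t+1-t)$; since $\log t+1-t$ is strictly concave in $t>0$ with a unique maximum at $t=1$, the map $x\mapsto f(\varphi,x)$ is maximized exactly at $x=\omega^2$. Hence $f$ attains its maximal value $f(\varphi,\omega^2)=0$ at the single point $\boldsymbol{\eta}$, and because $\boldsymbol{\eta}\in\Gamma$ by hypothesis, $\boldsymbol{\eta}$ is also the unique maximizer of $f$ over $\Gamma$.

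Next I would run the separation argument. Write $R_n(\mathbf{z},y)=\tfrac1n\bigl(L_n(\mathbf{z},y)-L_n(\boldsymbol{\theta})\bigr)$, so that Lemma~\ref{lem:4.1} reads $\sup_{(\mathbf{z},y)\in\Gamma^*}|R_n(\mathbf{z},y)-f(\mathbf{z})|\overset{P}{\longrightarrow}0$; the point to stress is that $f$ does not depend on $y$ while the supremum already ranges over all $y\in[y_*,y^*]$, so uniformity in $y$ costs nothing extra. Fix $\varepsilon>0$ and set $K_\varepsilon=\{\mathbf{z}\in\Gamma:|\mathbf{z}-\boldsymbol{\eta}|\ge\varepsilon\}$; if $K_\varepsilon=\emptyset$ there is nothing to prove, so assume not. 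As a closed subset of the compact set $\Gamma$, $K_\varepsilon$ is compact, and since $f<0$ off $\boldsymbol{\eta}$ we have $\delta:=-\max_{\mathbf{z}\in K_\varepsilon}f(\mathbf{z})>0$. On the event $A_n=\{\sup_{(\mathbf{z},y)\in\Gamma^*}|R_n-f|<\delta/2\}$, whose probability tends to one, every $y\in[y_*,y^*]$ satisfies $R_n(\boldsymbol{\eta},y)>f(\boldsymbol{\eta})-\delta/2=-\delta/2$, whereas $R_n(\mathbf{z},y)<f(\mathbf{z})+\delta/2\le-\delta/2$ for all $\mathbf{z}\in K_\varepsilon$. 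Because $\widehat{\boldsymbol{\eta}}_n(y)$ maximizes $R_n(\cdot,y)$ over $\Gamma$ and $\boldsymbol{\eta}\in\Gamma$, no maximizer can lie in $K_\varepsilon$, so $|\widehat{\boldsymbol{\eta}}_n(y)-\boldsymbol{\eta}|<\varepsilon$ for all $y$ at once on $A_n$. Thus $\sup_{y_*\le y\le y^*}|\widehat{\boldsymbol{\eta}}_n(y)-\boldsymbol{\eta}|<\varepsilon$ on $A_n$, which yields the asserted convergence in probability.

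The hard part is purely one of bookkeeping the strict inequalities, so that $\boldsymbol{\eta}$ beats every point of $K_\varepsilon$ by the common margin $\delta/2$ simultaneously for all $y$; the computation showing $f$ peaks at $(\varphi,\omega^2)$ is elementary, and uniformity over $y$ is free precisely because Lemma~\ref{lem:4.1} already supplies convergence uniform over $\Gamma^*=\Gamma\times[y_*,y^*]$.
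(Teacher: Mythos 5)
Your proof is correct and follows essentially the same route as the paper: establish that $f$ has the unique maximizer $\boldsymbol{\eta}=(\varphi,\omega^2)$ over $\Gamma$ and then combine the uniform convergence of Lemma~\ref{lem:4.1} with the standard argmax separation argument, which the paper simply invokes by citing Pfanzagl (1969) rather than writing out. The only difference is that you make the ``standard arguments'' explicit (the compactness of $K_\varepsilon$, the margin $\delta/2$, and the observation that uniformity in $y$ is free since $f$ does not depend on $y$), which is exactly what the citation is meant to cover.
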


\begin{proof}
It is easy to see that
\[
f(s,x) \leq f({\boldsymbol \eta})\quad \text{ for all } \ (s, x)
\]
and we have equality if and only if $(s,x)={\boldsymbol\eta} $.
Since
\[
\max_{{\bf z} \in \Gamma} \bigl(L_n({\bf z}, y) - L_n({
\btheta})\bigr) = L_n (\widehat{\boldsymbol \eta}_n, y) - L_n({
\btheta}),
\]
$L_n({\bf u})$, ${\bf u} \in \Gamma^*$ is continuous on
$\Gamma^*$, it converges uniformly to $f(s,x)$, standard arguments
provide the result (cf.\ Pfanzangl (1969)).
\end{proof}

\begin{lemma}
\label{lem:4.3} If the conditions of Theorem \ref{th:2.4} are
satisfied, then for all $0 < y$ we have \beq \biggl| g_{1,n}(y) -
\sum_{1 \leq k \leq n} \frac{b_k}{\omega^2} \biggr| = o_P(n^{1/2})
\label{eq:4.10} \eeq
 and
\beq \biggl| g_{2,n}(\sigma^2) - \sum_{1 \leq k \leq n}
\frac1{2\omega^4} (b_k^2-\omega^2 ) \biggr| = o_P(n^{1/2}),
\label{eq:4.11} \eeq
 where $g_{1,n}(y)$ and
$g_{2,n}(y)$ are the partial derivatives of $L_n({\bf u})$ with
respect to $s$ and $x$ at $(\varphi, \omega^2, y)$.
\end{lemma}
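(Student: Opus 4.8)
The plan is to compute the two partial derivatives in closed form and then exploit a martingale structure. Differentiating $\ell_k$ and substituting $X_k-\varphi X_{k-1}=b_kX_{k-1}+e_k$, I obtain
\[
g_{1,n}(y)=\sum_{1\le k\le n}\frac{(b_kX_{k-1}+e_k)X_{k-1}}{\omega^2X_{k-1}^2+y},
\]
\[
g_{2,n}(\sigma^2)=\frac12\sum_{1\le k\le n}\left(-\frac{X_{k-1}^2}{D_k}+\frac{(b_kX_{k-1}+e_k)^2X_{k-1}^2}{D_k^2}\right),\qquad D_k=\omega^2X_{k-1}^2+\sigma^2.
\]
Let $\mathcal F_k=\sigma\{(b_i,e_i):i\le k\}$, so that $X_{k-1}$ is $\mathcal F_{k-1}$--measurable and $(b_k,e_k)$ is independent of $\mathcal F_{k-1}$. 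Using $Eb_k=Ee_k=0$ I get $E[(b_kX_{k-1}+e_k)X_{k-1}\mid\mathcal F_{k-1}]=0$, and, crucially at $y=\sigma^2$, the identity $E[(b_kX_{k-1}+e_k)^2\mid\mathcal F_{k-1}]=\omega^2X_{k-1}^2+\sigma^2=D_k$ makes the $D_k^2$ term cancel the $-X_{k-1}^2/D_k$ term in conditional mean. Hence both $g_{1,n}(y)$ and $g_{2,n}(\sigma^2)$, as well as the target linear statistics $\sum b_k/\omega^2$ and $\sum(b_k^2-\omega^2)/(2\omega^4)$, are sums of martingale differences. (This is exactly why the lemma fixes $y=\sigma^2$ for $g_{2,n}$: the cancellation fails otherwise.)

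Writing $V_k=(b_kX_{k-1}+e_k)X_{k-1}/(\omega^2X_{k-1}^2+y)$, $U_k=b_k/\omega^2$ and, for the second assertion, $W_k$ for the $k$th summand of $2g_{2,n}(\sigma^2)$ and $T_k=(b_k^2-\omega^2)/\omega^4$, each per--term discrepancy $V_k-U_k$ and $W_k-T_k$ is again a martingale difference. Therefore the cross terms vanish and
\[
\mathrm{Var}\Big(\sum_{1\le k\le n}(V_k-U_k)\Big)=\sum_{1\le k\le n}E(V_k-U_k)^2,
\]
with the analogous identity for $W_k-T_k$. The proof thus reduces to showing $E(V_k-U_k)^2\to0$ and $E(W_k-T_k)^2\to0$, since then Ces\`aro summation gives $\sum_{k\le n}E(V_k-U_k)^2=o(n)$ and Chebyshev's inequality yields the claimed $o_P(n^{1/2})$ rate.

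For the pointwise decay I would invoke Lemma \ref{lem:4.0}. A direct computation gives $V_k-U_k=-b_ky/(\omega^2D_k')+e_kX_{k-1}/(\omega^2X_{k-1}^2+y)$ with $D_k'=\omega^2X_{k-1}^2+y$, and one checks from the explicit form of $W_k-T_k$ that both discrepancies tend to $0$ as $|X_{k-1}|\to\infty$; combined with $|X_{k-1}|\overset{P}{\to}\infty$ and the tightness of $(b_k,e_k)$, this gives $V_k-U_k\overset{P}{\to}0$ and $W_k-T_k\overset{P}{\to}0$. The main obstacle is upgrading this in--probability convergence of the individual terms to convergence of their second moments: because Lemma \ref{lem:4.0} only supplies $|X_n|\overset{P}{\to}\infty$ rather than an almost sure statement, I cannot appeal to the almost sure dominated convergence theorem and must instead establish uniform integrability.

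To that end I would record, using $D_k\ge\omega^2X_{k-1}^2$ and $D_k\ge\sigma^2$ (and $\omega^2X_{k-1}^2+y\ge\max\{\omega^2X_{k-1}^2,y\}$), the bounds $|V_k-U_k|\le c(|b_k|+|e_k|)$ and $|W_k-T_k|\le c(1+b_k^2+e_k^2)$, uniform in $X_{k-1}$. Since $(b_k,e_k)$ are identically distributed, the dominating variables are identically distributed and integrable---this is precisely where the fourth--moment assumption \eqref{moment} enters, making $(1+b_k^2+e_k^2)^2$ integrable so that $(W_k-T_k)^2$ is uniformly integrable; the first claim needs only second moments. Convergence in probability together with uniform integrability then gives $E(V_k-U_k)^2\to0$ and $E(W_k-T_k)^2\to0$, and with the variance reduction above the proof is complete.
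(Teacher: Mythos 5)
Your proposal is correct and follows essentially the same route as the paper's proof: the same explicit derivative formulas and term-by-term decompositions, the same martingale-difference/orthogonality variance computation (valid at $y=\sigma^2$ for $g_{2,n}$ precisely because $\mathrm{cov}(b_0,e_0)=0$), per-term decay via Lemma \ref{lem:4.0}, and a Ces\`aro--Chebyshev conclusion. The only minor technical difference is that you upgrade the per-term convergence in probability to convergence of second moments via uniform integrability, whereas the paper factorizes each expectation using the independence of $(b_k,e_k)$ and $X_{k-1}$ and then lets the resulting bounded functions of $X_{k-1}$ tend to zero; both devices are valid and interchangeable here.
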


\begin{proof}
Elementary calculations yield
\[
\frac{\partial \ell_k({\bf u})}{\partial s} = \frac{(X_k - sX_{k -
1}) X_{k - 1}}{x X^2_{k - 1} + y}
\]
and
\[
\frac{\partial \ell_k({\bf u})}{\partial x} = -\frac12 \left[
\frac{X^2_{k - 1}}{x X^2_{k - 1} + y} - \frac{(X_k - s X_{k -
1})^2 X^2_{k - 1}}{(x X^2_{k - 1} + y)^2} \right]
\]
and therefore
\begin{align*}
g_{1,n}(y) &= \sum_{1 \leq k \leq n} \left\{ \frac{b_k X^2_{k -
1}}{\omega^2 X^2_{k - 1} + y} + \frac{e_k
X_{k - 1}}{\omega^2 X^2_{k - 1} + y} \right\},\\
g_{2,n}(y) &= \sum_{1 \leq k \leq n} -\frac12 \left\{ \frac{X^2_{k
- 1}}{\omega^2 X^2_{k - 1} + y} - \frac{(X_{k - 1} b_k + e_k)^2
X^2_{k - 1}}{(\omega^2 X^2_{k - 1} + y)^2} \right\}.
\end{align*}
Using the independence of $(e_k, b_k)$ and $X_{k - 1}$ we get
\[
\text{\rm var}\biggl( \frac1{n^{1/2}} \sum_{1 \leq k \leq n}
\frac{e_k X_{k - 1}}{\omega^2 X^2_{k - 1} + y} \biggr) =
\frac{\sigma^2}{n} \sum_{1 \leq k \leq n} E\frac{X^2_{k -
1}}{(\omega^2 X^2_{k - 1} + y)^2} \longrightarrow 0.
\]
Similarly,
\begin{align*}
\text{\rm var}&\Biggl( n^{-1/2} \sum_{1 \leq k \leq n} b_k
\biggl\{ \frac{X^2_{k - 1}}{\omega^2 X^2_{k - 1} + y} -
\frac1{\omega^2} \biggr\} \Biggr)
\vspace{.3 cm}\\
&= \text{\rm var} \biggl( n^{-1/2} \sum_{1 \leq k \leq n} b_k
\frac{y}{\omega^2 (\omega^2 X^2_{k - 1} + y)} \biggr)
\vspace{.3 cm}\\
&= \frac{y^2}{\omega^2} \frac1n \sum_{1 \leq k \leq n}
E\frac1{(\omega^2 X^2_{k - 1} + y)^2} \longrightarrow 0,
\end{align*}
and thus an application of the Markov inequality completes the proof of  \eqref{eq:4.10}.

Write
\begin{align}
&\frac{X^2_{k - 1}}{\omega^2 X^2_{k - 1} + y} - \frac{(X_{k - 1}
b_k + e_k)^2 X^2_{k - 1}}{(\omega^2 X^2_{k - 1} + y)^2}
\label{decomp}
\vspace{.3 cm}\\
&= (\omega^2 - b^2_k) \frac{X^4_{k - 1}}{(\omega^2 X^2_{k - 1} + y
)^2} + \frac{X^2_{k - 1}}{(\omega^2 X^2_{k - 1} + y)^2} (y -2e_k
b_k X_{k - 1} - e^2_k) \notag
\vspace{.3 cm}\\
&=(\omega^2 - b^2_k) \frac{X^4_{k - 1}}{(\omega^2 X^2_{k - 1} + y
)^2} + \frac{X^2_{k - 1}}{(\omega^2 X^2_{k - 1} + y)^2}
(\sigma^2-e^2_k) \notag
\vspace{.3 cm}\\
&\hspace{2 cm}-\frac{X^2_{k - 1}}{(\omega^2 X^2_{k - 1} + y)^2}
2e_k b_k X_{k - 1}  \notag
\vspace{.3 cm}\\
& \hspace{2 cm} +\frac{X^2_{k - 1}}{(\omega^2 X^2_{k - 1} +
y)^2}(y-\sigma^2). \notag
\end{align}
One can easily verify
\beq\label{d1}
 E\Biggl( n^{-1/2} \sum_{1 \leq
k \leq n} (\omega^2 - b^2_k) \biggl( \frac{X^4_{k - 1}}{(\omega^2
X^2_{k - 1} + y)^2} - \frac1{\omega^4} \biggr)\Biggr)^2
\longrightarrow 0,
 \eeq
\beq\label{d2}
 E \Biggl( n^{-1/2} \sum_{1 \leq k \leq
n} \frac{X^2_{k - 1}}{(\omega^2 X^2_{k - 1} + y)^2} (\sigma^2-
e^2_k)\Biggr)^2 \longrightarrow 0,
\eeq
\beq\label{d3}
E \Biggl( n^{-1/2} \sum_{1 \leq k \leq n}
\frac{X^2_{k - 1}}{(\omega^2 X^2_{k - 1} + y)^2}  2 e_k b_k X_{k -
1}\Biggr)^2 \longrightarrow 0, \eeq
and since $y=\sigma^2$ is assumed

\beq\label{d4} \frac{1}{n^{1/2}}|y-\sigma^2|E\sum_{1 \leq k \leq
n}\frac{X^2_{k - 1}}{(\omega^2 X^2_{k - 1} + y)^2}= 0, \eeq
and
therefore \eqref{eq:4.11} is proven.
\end{proof}

\begin{lemma}
\label{lem:4.5} If the conditions of Lemma~\ref{lem:4.1} are
satisfied, then
\[
\sup_{{\bf u} \in \Gamma^*} \bigl| g_{ij,n}({\bf u}) - g_{ij}
({\bf u}) \bigr| \overset{P}{\longrightarrow} 0 \quad 1 \leq i, j
\leq 2,
\]
where
\begin{align*}
g_{11, n}({\bf u}) &= \frac{\partial^2}{\partial s^2} \frac1n
L_n({\bf u}) = -\frac1n \sum_{1 \leq k \leq n} \frac{X^2_{k -
1}}{x X^2_{k - 1} + y};\\
g_{12, n}({\bf u}) &= g_{21,n} ({\bf u}) =
\frac{\partial^2}{\partial s \partial x} \frac1n L_n ({\bf u}) =
\frac{\partial^2}{\partial x \partial s} \frac1n L_n({\bf u}) \\
&= \frac1n \sum_{1 \leq k \leq n} - \frac{(X_k - s X_{k - 1})
X^3_{k - 1}}{(x X^2_{k - 1} + y)^2},\\
g_{22,n}({\bf u}) &= \frac{\partial^2}{\partial x^2} \frac1n L_n
({\bf u}) = \frac1n \sum_{1 \leq k \leq n} \left\{\frac{X^4_{k -
1}}{2(x X^2_{k - 1} + y)^2} - \frac{(X_k - s X_{k - 1})^2 X^4_{k -
1}}{(x X^2_{k - 1} + y)^3} \right\}
\end{align*}
and
$$
g_{11}({\bf u}) = -\frac1x, \qquad
g_{12}({\bf u}) = g_{21}({\bf u}) = - \frac{(\varphi - s)}{x^2}, \qquad
g_{22}({\bf u}) = \frac1{2x^2} - \frac{(\varphi - s)^2 +
\omega^2}{x^3}.
$$
\end{lemma}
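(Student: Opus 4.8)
The plan is to handle the three functions $g_{11,n}$, $g_{12,n}$, $g_{22,n}$ one at a time, each by the same decomposition strategy already used in the proof of Lemma~\ref{lem:4.1}. In every summand I substitute $X_k - sX_{k-1} = (\varphi - s)X_{k-1} + b_k X_{k-1} + e_k$, expand, and split the resulting term into a \emph{main} piece whose per-term value is the target limit evaluated at the current ${\bf u}$, plus a \emph{remainder}. The engine throughout is Lemma~\ref{lem:4.0}: since $|X_n|\overset{P}{\longrightarrow}\infty$ and for $j<2m$ the function $t\mapsto |t|^j/(x_* t^2 + y_*)^m$ is continuous, bounded, and tends to $0$ at infinity, bounded convergence gives $E\,|X_n|^j/(x_* X_n^2 + y_*)^m \to 0$, exactly as in \eqref{p1}. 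Combined with the Markov inequality and the independence of $(b_k,e_k)$ from $X_{k-1}$, this annihilates the remainders uniformly over ${\bf u}\in\Gamma^*$, because the remainders can be bounded by expressions involving only $x_*,y_*$ and no longer depending on ${\bf u}$.

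For $g_{11,n}$ I use the identity $\frac{X_{k-1}^2}{xX_{k-1}^2+y} = \frac1x - \frac{y}{x(xX_{k-1}^2+y)}$, so that $g_{11,n}({\bf u})-g_{11}({\bf u}) = \frac1n\sum_{1\le k\le n}\frac{y}{x(xX_{k-1}^2+y)}$, whose supremum over $\Gamma^*$ is at most $\frac{y^*}{x_*}\cdot\frac1n\sum_{1\le k\le n}\frac{1}{x_*X_{k-1}^2+y_*}\overset{P}{\longrightarrow} 0$ by Markov. For $g_{12,n}$ the expansion yields three groups: the $(\varphi-s)$-term $-(\varphi-s)\frac{X_{k-1}^4}{(xX_{k-1}^2+y)^2}$, which I split as $-(\varphi-s)/x^2$ plus a remainder controlled uniformly via $\frac{1}{x_*X_{k-1}^2+y_*}$; and the centered $b_k$- and $e_k$-terms, whose constant-in-$X$ parts are handled by $\frac1n\sum b_k\overset{P}{\longrightarrow}0$ (the supremum over $x$ only contributing the harmless factor $x_*^{-2}$) and whose remaining pieces vanish by the Markov/independence argument above.

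The bulk of the work, and the main obstacle, is $g_{22,n}$, both because of the number of terms and the need to recover $\omega^2$. After squaring $X_k-sX_{k-1}$ and multiplying by $X_{k-1}^4/(xX_{k-1}^2+y)^3$, the six summands behave as follows: $\frac{X_{k-1}^4}{2(xX_{k-1}^2+y)^2}\to \frac{1}{2x^2}$ and the $(\varphi-s)^2$-term contributes $-(\varphi-s)^2/x^3$, both via the ratio $\frac{X_{k-1}^6}{(xX_{k-1}^2+y)^3}\to x^{-3}$ plus a uniform remainder; the $e_k^2$-term and all cross terms $e_kb_kX_{k-1}$, $(\varphi-s)e_kX_{k-1}$, $(\varphi-s)b_kX_{k-1}^2$ are centered or of strictly lower degree and vanish by Markov and the weak law. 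The delicate term is $b_k^2\,X_{k-1}^6/(xX_{k-1}^2+y)^3$: writing $\frac{X_{k-1}^6}{(xX_{k-1}^2+y)^3}=\frac{1}{x^3}-r_k$ with $0\le r_k$ uniformly dominated, I use $\frac1n\sum b_k^2\overset{P}{\longrightarrow}\omega^2$ (weak law, requiring only $Eb_0^2<\infty$), so that $\frac{1}{x^3}\frac1n\sum b_k^2\to \omega^2/x^3$ uniformly (the supremum over $x\ge x_*$ merely inflates $|\frac1n\sum b_k^2-\omega^2|$ by $x_*^{-3}$), while $\frac1n\sum b_k^2\,r_k\overset{P}{\longrightarrow}0$ by independence and Markov. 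Collecting the surviving limits gives $\frac{1}{2x^2}-\frac{(\varphi-s)^2+\omega^2}{x^3}=g_{22}({\bf u})$, uniformly in ${\bf u}$.

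Since there are only finitely many pairs $(i,j)$ and each $g_{ij,n}$ converges uniformly over $\Gamma^*$ to $g_{ij}$, the joint supremum statement of the lemma follows at once.
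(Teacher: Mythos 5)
Your proof is correct and takes precisely the approach the paper intends: the paper's own proof of Lemma~\ref{lem:4.5} is the one-line remark that it ``can be proven along the lines of the proof of Lemma~\ref{lem:4.1}'', and your argument is exactly that --- the substitution $X_k - sX_{k-1} = (\varphi-s)X_{k-1} + b_kX_{k-1} + e_k$, the split into main terms plus remainders dominated by $c/(x_*X_{k-1}^2+y_*)$ uniformly over $\Gamma^*$, and the Markov/independence/weak-law machinery driven by $|X_n|\overset{P}{\longrightarrow}\infty$. In effect you have supplied the details the paper omits.
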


\begin{proof}
It can be proven along the lines of the proof of
Lemma~\ref{lem:4.1} and therefore the details are omitted.
\end{proof}

\begin{proof}[Proof of Theorem~\ref{th:2.4}]
Combining the central limit theorem for independent identically
distributed random vectors with Lemma~\ref{lem:4.3}, we get that
\beq\label{eq:4.12}
 n^{-1/2}\bigl(g_{1,n}(\sigma^2), g_{2,n}(\sigma^2)\bigr) \overset{\mathcal
D}{\longrightarrow} N_2 ({\bf 0}, \Omega_*),
\eeq
 where
\[
\Omega_* = \left(\begin{matrix}
\displaystyle \frac1{\omega^2}\;\; & \displaystyle \;\;\frac{Eb^3_0}{2\omega^4}
\notag
\vspace{.3 cm}\\
\displaystyle \frac{E b^3_0}{2\omega^4} & \displaystyle \frac{\text{\rm var}\, b^2_0}{4
\omega^8} \end{matrix} \right).\notag
\]
Let $\|\cdot \|$ denote the maximum norm of vectors. Let ${\mathbf \btr}h({\bf u})=(\partial h({\bf u})/\partial u_1,\partial h({\bf u})/\partial u_2)^T. $  Applying the mean
value theorem to the coordinates of $\btr L_n({\bf u}, \sigma^2) $, there are  random vector ${\boldsymbol \xi}_{n,1}$ and  ${\boldsymbol \xi}_{n,2}$  such
that $\| {\boldsymbol \xi}_{n,j} - {\boldsymbol \eta}\| \leq \|
\widehat{{\boldsymbol \eta}}_n - {\boldsymbol \eta}\|, j=1,2$ and
 \beq
{ 0} = \frac{\partial L_n({\boldsymbol \eta},
\sigma^2)}{\partial { u}_j} + \left({\boldsymbol \bigtriangledown} \frac{\partial L_n({\boldsymbol
\xi}_{n,j}, \sigma^2)}{\partial { u}_j }\right)^T
\bigl(\widehat{{\boldsymbol \eta}}_n - {\boldsymbol \eta}),\;\;j=1,2.
\label{eq:4.13} \eeq
 Lemma \ref{lem:4.5} and Theorem \ref{thm-new}
 give that for all $y>0$ 
\beq \displaystyle
\left(
\begin{matrix}
\displaystyle \frac1n {\boldsymbol \bigtriangledown}\frac{\partial
L_n({\boldsymbol \xi}_{n,1}, y)}{\partial { u}_1
},  \frac1n {\boldsymbol \bigtriangledown}\frac{\partial
L_n({\boldsymbol \xi}_{n,2}, y)}{\partial { u}_2}
\end{matrix}
\right)
\overset{P}{\longrightarrow} \Omega_{**},
\label{eq:4.14}
\eeq
where
\[
\Omega_{**} = \left(\begin{matrix}
\displaystyle -\frac1{\omega^2}  & 0
\vspace{.0 cm}\\
0  &  \displaystyle -\frac1{2\omega^4} \end{matrix} \right).
\]
Putting together \eqref{eq:4.12}--\eqref{eq:4.14} we conclude
\[
n^{1/2} \bigl(\widehat{{\boldsymbol \eta}}_n(\sigma^2) -
{\boldsymbol \eta}\bigr) \overset{\mathcal D}{\longrightarrow} N_2
\bigl({\bf 0}, \Omega^{-1}_{**} \Omega_* \Omega^{-1}_{**} \bigr).
\]
Since $\Omega_0 = \Omega^{-1}_{**} \Omega_* \Omega^{-1}_{**}$, the
proof of Theorem~\ref{th:2.4} is complete.
\end{proof}
The proof of Theorem \ref{th:2.5} uses the following lemma.
\begin{lemma}
\label{cont} If the conditions of Lemma~\ref{lem:4.1} are
satisfied, then for all $y>0$
\[
\sup_{{\bf u} \in \Gamma} \left| g_{12,n}({\bf u}) - (s-\varphi)\frac{1}{n}\sum_{1\leq k \leq n}\frac{X_{k-1}^4}{(xX_{k-1}^2+y)^2}\right|={\cal O}_P(n^{-1/2}).
\]
\end{lemma}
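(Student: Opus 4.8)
The plan is to expand the numerator of $g_{12,n}$ using the defining recursion and read off the claimed main term. Since $X_k-sX_{k-1}=(\varphi-s)X_{k-1}+b_kX_{k-1}+e_k$, substituting into the formula for $g_{12,n}({\bf u})$ from Lemma~\ref{lem:4.5} gives
\[
g_{12,n}({\bf u}) = (s-\varphi)\frac1n\sum_{1\leq k\leq n}\frac{X_{k-1}^4}{(xX_{k-1}^2+y)^2} - A_n(x) - B_n(x),
\]
where
\[
A_n(x) = \frac1n\sum_{1\leq k\leq n} b_k\frac{X_{k-1}^4}{(xX_{k-1}^2+y)^2},\qquad B_n(x) = \frac1n\sum_{1\leq k\leq n} e_k\frac{X_{k-1}^3}{(xX_{k-1}^2+y)^2}.
\]
Hence the quantity inside the supremum equals $-(A_n(x)+B_n(x))$, which does not depend on $s$, so it suffices to prove that $\sup_{x_*\leq x\leq x^*}|A_n(x)|$ and $\sup_{x_*\leq x\leq x^*}|B_n(x)|$ are ${\cal O}_P(n^{-1/2})$.

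First I would treat the pointwise rate. For each fixed $x$ the summands are martingale differences with respect to ${\cal F}_k=\sigma\{(b_i,e_i):i\leq k\}$, because $X_{k-1}$ is ${\cal F}_{k-1}$--measurable while $(b_k,e_k)$ is independent of ${\cal F}_{k-1}$ with mean $\bf 0$. Thus there are no cross terms, and using $Eb_0^2=\omega^2$, $Ee_0^2=\sigma^2$ together with the elementary bounds $X_{k-1}^4/(xX_{k-1}^2+y)^2\leq x_*^{-2}$ and $|X_{k-1}|^3/(xX_{k-1}^2+y)^2\leq c$, valid uniformly in $k$ and in $x\geq x_*$ for the fixed $y>0$, one gets
\[
E A_n(x)^2 = \frac{\omega^2}{n^2}\sum_{1\leq k\leq n}E\frac{X_{k-1}^8}{(xX_{k-1}^2+y)^4} = {\cal O}(n^{-1}),\qquad E B_n(x)^2 = {\cal O}(n^{-1}),
\]
uniformly in $x$. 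This is the routine part; the hard part will be upgrading it to the supremum over the compact interval $[x_*,x^*]$.

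To pass to the supremum I would use the fundamental theorem of calculus in $x$ followed by Cauchy--Schwarz: writing $A_n(x)=A_n(x_*)+\int_{x_*}^x\partial_v A_n(v)\,dv$,
\[
\sup_{x_*\leq x\leq x^*}|A_n(x)| \leq |A_n(x_*)| + (x^*-x_*)^{1/2}\left(\int_{x_*}^{x^*}\bigl(\partial_v A_n(v)\bigr)^2\,dv\right)^{1/2}.
\]
The derivative $\partial_x A_n(x)=-\frac2n\sum_{1\leq k\leq n} b_k X_{k-1}^6/(xX_{k-1}^2+y)^3$ is again a normalized sum of martingale differences whose weights are uniformly bounded, since $X_{k-1}^6/(xX_{k-1}^2+y)^3\leq x_*^{-3}$; consequently $E(\partial_x A_n(x))^2={\cal O}(n^{-1})$ uniformly in $x$. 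By Fubini the integral then has expectation ${\cal O}(n^{-1})$, so it is ${\cal O}_P(n^{-1})$ and its square root is ${\cal O}_P(n^{-1/2})$; combined with $A_n(x_*)={\cal O}_P(n^{-1/2})$ this yields $\sup_x|A_n(x)|={\cal O}_P(n^{-1/2})$. The identical argument applies to $B_n$, once one checks that both the weight $|X_{k-1}|^3/(xX_{k-1}^2+y)^2$ and its derivative weight $|X_{k-1}|^5/(xX_{k-1}^2+y)^3$ stay bounded uniformly in $k$ and $x\geq x_*$ (which holds precisely because $y>0$ is fixed). The only delicate point throughout is this transition from a pointwise to a uniform rate, and it reduces entirely to verifying that each weight function and its $x$--derivative admit deterministic uniform bounds, making every variance estimate genuinely uniform in $x$.
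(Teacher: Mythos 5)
Your proof is correct, and its key step differs genuinely from the paper's. The opening algebra is the same as the paper's: substituting $X_k-sX_{k-1}=(\varphi-s)X_{k-1}+b_kX_{k-1}+e_k$ into the formula for $g_{12,n}$ from Lemma \ref{lem:4.5}, so that the quantity inside the supremum becomes $-(A_n(x)+B_n(x))$, free of $s$. Where you diverge is the passage from the pointwise rate to the uniform rate over $[x_*,x^*]$. The paper does not bound $A_n$ directly: it expands $X_{k-1}^4/(xX_{k-1}^2+y)^2$ as $x^{-2}$ minus terms whose weights tend to zero, extracts the i.i.d.\ term $x^{-2}n^{-1}\sum_{1\leq k\leq n} b_k$ (uniformly ${\cal O}_P(n^{-1/2})$ by the CLT, since $x\geq x_*$), and then shows the leftover weighted processes, normalized by $n^{1/2}$, converge to zero in ${\cal C}[x_*,x^*]$ --- finite-dimensional convergence coming from Lemma \ref{lem:4.0} (that is, $|X_n|\overset{P}{\longrightarrow}\infty$), and tightness from Billingsley's moment criterion, via the bound $E(A_n(x)-A_n(x'))^2\leq (x-x')^2$. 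You instead keep $A_n$ and $B_n$ intact and convert $L^2$ control of the processes and of their $x$-derivatives (both are martingale-difference sums with deterministically bounded weights, hence have second moments ${\cal O}(n^{-1})$ uniformly in $x$) into sup control through the fundamental theorem of calculus plus Cauchy--Schwarz and Fubini. Both routes are valid; yours is more elementary and self-contained --- it needs no functional convergence argument, no tightness verification, and it never uses $|X_n|\overset{P}{\longrightarrow}\infty$, only second moments of $(b_0,e_0)$ and the boundedness of the weight functions. What the paper's argument buys in exchange is sharper information: it isolates the exact leading term $x^{-2}n^{-1}\sum_{1\leq k\leq n}b_k$ and shows every remaining term is $o_P(n^{-1/2})$, in the same style as Lemma \ref{lem:4.3}, whereas your argument delivers precisely the ${\cal O}_P(n^{-1/2})$ bound claimed in the statement, which is all that is needed here.
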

\begin{proof}
Using the expression for $g_{12,n}({\bf u})$ in Lemma \ref{lem:4.5} we get that
\begin{align*}
g_{12,n}({\bf u})=-\frac{1}{n}\sum_{1\leq k \leq n}\frac{(\varphi -s)X_{k-1}^4}{(xX_{k-1}^2+y)^2}
-\frac{1}{n}\sum_{1\leq k \leq n}\frac{b_kX_{k-1}^4}{(xX_{k-1}^2+y)^2}-\frac{1}{n}\sum_{1\leq k \leq n}\frac{e_kX_{k-1}^3}{(xX_{k-1}^2+y)^2}.
\end{align*}
Also, 
$$
\frac{1}{n}\sum_{1\leq k \leq n}\frac{b_kX_{k-1}^4}{(xX_{k-1}^2+y)^2}=\frac{1}{n}\frac{1}{x^2}\sum_{1\leq k \leq n}b_k
-\frac{2y}{xn}\sum_{1\leq k \leq n}\frac{b_kX_{k-1}^2}{(xX_{k-1}^2+y)^2}-\frac{1}{n}\frac{y^2}{x^2}\sum_{1\leq k \leq n}\frac{b_k}{(xX_{k-1}^2+y)^2}.
$$
The central limit theorem yields
$$
\sup_{x_*\leq x \leq x^*}\left|\frac{1}{n}\frac{1}{x^2}\sum_{1\leq k \leq n}b_k\right|={\cal O}_P(n^{-1/2}).
$$
Next we show that
\beq\label{conta}
\sup_{x_*\leq x \leq x^*}\left|A_n(x)\right|=o_P(1),
\eeq
where
$$
A_n(x)=\frac{1}{n^{1/2}}\sum_{1\leq k \leq n}\frac{b_kX_{k-1}^2}{(xX_{k-1}^2+y)^2}.
$$
Since for any $x\in [x_*,x^*]$
$$
EA_n(x)^2=\frac{\omega^2}{n}\sum_{1\leq k \leq n}E\left(\frac{X_{k-1}^2}{(xX_{k-1}^2+y)^2}\right)^2\to 0,
$$
the finite dimensional distributions of $A_n(x)$ converge to 0. Similarly, for all $x,x'\in [x_*,x^*]$ we have by the mean value theorem that
\begin{align*}
E(A_n(x)-A_n(x'))^2&=\frac{\omega^2}{n}\sum_{1\leq k \leq n}E\left(   X_{k-1}^2 \left[ \frac{1}{(xX_{k-1}^2+y)^2}-\frac{1}{(x'X_{k-1}^2+y)^2}
\right]\right)^2
\vspace{.3 cm}\\
&\leq (x-x')^2\frac{\omega^2}{n}\sum_{1\leq k \leq n}E\left(  2 X_{k-1}^4\frac{1}{(x_*X_{k-1}^2+y_*)^3}\right)^2
\vspace{.3 cm}\\
&\leq (x-x')^2,
\end{align*}
for all $n$ large enough. By Billingsley (1968, p.\ 96), the sequence $A_n(x)$ is tight, and therefore $A_n(x)$ converges in ${\cal C}[x_*,x^*]$ to 0. Hence the proof of \eqref{conta} is complete. 

Repeating the arguments leading to \eqref{conta}, we conclude 
$$
\sup_{x_*\leq x \leq x^*}\left|\frac{1}{n}\frac{y^2}{x^2}\sum_{1\leq k \leq n}\frac{b_k}{(xX_{k-1}^2+y)^2}\right|+\sup_{x_*\leq x \leq x^*}\left|\frac{1}{n}\sum_{1\leq k \leq n}\frac{e_kX_{k-1}^3}{(xX_{k-1}^2+y)^2}\right|={\cal O}_P(n^{-1/2}).
$$
The proof of Lemma \ref{cont} is established now.
\end{proof}
\begin{proof}[Proof of Theorem \ref{th:2.5}]
Similarly to (\ref{eq:4.13}) we have
$$
 { 0} = \frac{\partial L_n({\boldsymbol \eta}, y)}{\partial
{ u}_j} + \left({\boldsymbol \bigtriangledown}\frac{\partial L_n({\boldsymbol \xi}_{n,j}, y)}{\partial
{ u}_j }\right)^T \bigl(\widehat{{\boldsymbol
\eta}}_n(y) - {\boldsymbol \eta}),\;\;j=1,2,
$$
%
where ${\boldsymbol \xi}_{n,j}$ satisfies $\| {\boldsymbol \xi}_{n,j} -
{\boldsymbol \eta}\| \leq \| \widehat{{\boldsymbol \eta}}_n(y) -
{\boldsymbol \eta}\|$, $j=1,2$. This  gives
\beq\label{eq:4.13a}
\widehat{\eta}_{n,1}(y)-\varphi=-\left(
c_{11}(n)\frac{1}{n}g_{1,n}(y)+c_{12}(n)\frac{1}{n}g_{2,n}(y)
  \right),
\eeq
 where $c_{ij}(n)$ are defined by
\beq\label{inv} \displaystyle 
\left(
\begin{matrix}
\displaystyle \frac1n {\boldsymbol \bigtriangledown}\frac{\partial L_n({\boldsymbol
\xi}_{n,1}, y)}{\partial { u}_1} ,\frac1n {\boldsymbol \bigtriangledown}\frac{\partial L_n({\boldsymbol
\xi}_{n,2}, y)}{\partial { u}_2}
\end{matrix}\right)^{-1}
 = \left(\begin{matrix} \displaystyle c_{11}(n) &
c_{12}(n)
\vspace{.0 cm}\\
c_{21}(n) &  \displaystyle c_{22}(n) \end{matrix} \right).
 \eeq
Using \eqref{decomp}--\eqref{d3} we get that
\beq\label{kell}
g_{2,n}(y)=o_P(n).
\eeq
Now \eqref{eq:4.14} gives that $c_{11}(n)\rightarrow -\omega^2$
in probability. Applying Lemma \ref{cont} and  \eqref{eq:4.14} we
get that
\beq\label{cs}
 |c_{12}(n)|=|\widehat{\eta}_{1,n}(y)-\varphi|{\cal
O}_P(1) +{\cal O}_P(n^{-1/2}).
\eeq
By 
\eqref{eq:4.13a}--\eqref{cs} we conclude
\begin{align*}
\widehat{\eta}_{n,1}(y)-\varphi &=\left({-\omega^2}
+o_P(1)\right)\frac{1}{n}g_{1,n}(y)+[|\widehat{\eta}_{n,1}(y)-\varphi|{\cal O}_P(1) +{\cal
O}_P(n^{-1/2})]o_P(1)
\vspace{.3 cm}\\
&=\left({-\omega^2} +o_P(1)\right)\frac{1}{n}g_{1,n}(y)+|\widehat{\eta}_{n,1}(y)-\varphi|o_P(1) +o_P(n^{-1/2}),
\end{align*}
which yields
$$
\widehat{\eta}_{n,1}(y)-\varphi=(1+o_P(1))^{-1}\left(
({-\omega^2}+o_P(1))\frac{1}{n}g_{n,1}(y)+o_P(n^{-1/2})\right)
$$
Now the first part of Theorem \ref{th:2.5} follows from \eqref{eq:4.10}. 

The second part is an immedaite consequence of the first part
Theorem
\ref{thm-new}
and  Slutsky's lemma.
\end{proof}
The proof of Theorem \ref{stable} is based on the following
modification of Lemma \ref{lem:4.3}.

\begin{lemma}\label{lem:4.6}
If the conditions of Theorem \ref{stable} are satisfied, then for
all $0 < y$ we have \beq \biggl| g_{1,n}(y) - \sum_{1 \leq k \leq
n} \frac{b_k}{\omega^2} \biggr| = o_P(n^{1/2}) \label{eq:4.20}
\eeq
 and
\beq \biggl| g_{2,n}(\sigma^2) - \sum_{1 \leq k \leq n}
\frac1{2\omega^4} (b_k^2-\omega^2 ) \biggr| = o_P(a_n),
\label{eq:4.21} \eeq
 where $g_{1,n}(y)$ and
$g_{2,n}(y)$ are the partial derivatives of $L_n({\bf u})$ with
respect to $s$ and $x$ at $(\varphi, \omega^2, y)$.
\end{lemma}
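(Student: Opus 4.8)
The plan is to follow the proof of Lemma~\ref{lem:4.3} essentially line by line, tracking the two places where the fourth--moment hypothesis \eqref{moment} was used and replacing them by arguments that require only \eqref{stab1} and \eqref{stab3}. The estimate \eqref{eq:4.20} is literally \eqref{eq:4.10}: its proof uses nothing beyond the finiteness of $\omega^2=Eb_0^2$ and $\sigma^2=Ee_0^2$, both of which hold here (the former because $\alpha>1$ in \eqref{stab1}, the latter because $\nu>2$ in \eqref{stab3}), so I would simply repeat it. For \eqref{eq:4.21} I would reuse the decomposition \eqref{decomp} of $g_{2,n}(\sigma^2)$; since $y=\sigma^2$ the last summand drops out as in \eqref{d4}, and the difference $g_{2,n}(\sigma^2)-\sum_k\frac1{2\omega^4}(b_k^2-\omega^2)$ reduces to the three pieces \eqref{d1}, \eqref{d2}, \eqref{d3}.

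The pieces \eqref{d2} and \eqref{d3} I would still bound by $o_P(n^{1/2})$, which is $o_P(a_n)$ because $a_n/n^{1/2}\to\infty$ (as $1/\alpha>1/2$). Piece \eqref{d2} requires $Ee_0^4<\infty$; this holds, since for $1<\alpha<2$ one has $2\alpha/(\alpha-1)>4$ and hence $\nu>4$ in \eqref{stab3}. Piece \eqref{d3} requires $E[e_0^2b_0^2]<\infty$, and this is exactly the point for which \eqref{stab3} is calibrated: by H\"older with conjugate exponents $p,q$ one has $E[e_0^2b_0^2]\le(E|e_0|^{2p})^{1/p}(E(b_0^2)^q)^{1/q}$, and one may choose $q<\alpha$ (so that $E(b_0^2)^q<\infty$ by \eqref{stab1}) together with $2p\le\nu$, which is possible precisely because $\nu>2\alpha/(\alpha-1)$ and $2q/(q-1)\to2\alpha/(\alpha-1)$ as $q\uparrow\alpha$.

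The only genuinely new work, and the main obstacle, is piece \eqref{d1}, namely
\[
\sum_{1\le k\le n}(\omega^2-b_k^2)\left(\frac{X_{k-1}^4}{(\omega^2X_{k-1}^2+\sigma^2)^2}-\frac1{\omega^4}\right)=o_P(a_n),
\]
because $Eb_0^4=\infty$ here and the $L^2$ computation of \eqref{d1} is no longer available. Writing $\zeta_k=\omega^2-b_k^2$ and $h_k=\frac{X_{k-1}^4}{(\omega^2X_{k-1}^2+\sigma^2)^2}-\frac1{\omega^4}$, the key structural facts are that $h_k$ is measurable with respect to the $\sigma$--algebra generated by $(b_i,e_i),\,i\le k-1$, hence independent of $\zeta_k$, that $|h_k|\le 2/\omega^4$, and that $h_k\to0$ in probability by Lemma~\ref{lem:4.0}; bounded convergence then gives $Eh_k^2\to0$ and $E|h_k|\to0$. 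I would truncate $\zeta_k$ at the level $a_n$ and treat three parts. For the centred bounded part $\tilde\zeta_k=\zeta_k\mathbf 1\{|\zeta_k|\le a_n\}-E\zeta_k\mathbf 1\{|\zeta_k|\le a_n\}$ the sum $\sum_k\tilde\zeta_kh_k$ is a martingale, so its second moment equals $\mathrm{var}(\zeta_1\mathbf 1\{|\zeta_1|\le a_n\})\sum_kEh_k^2$; Karamata's theorem gives $E[\zeta_1^2\mathbf 1\{|\zeta_1|\le a_n\}]=O(a_n^2/n)$, so after division by $a_n^2$ this is $O(n^{-1}\sum_kEh_k^2)\to0$. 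For the large part $\sum_k\zeta_k\mathbf 1\{|\zeta_k|>a_n\}h_k$ and for the centring term I would bound the expected absolute value by a multiple of $n\,E[|\zeta_1|\mathbf 1\{|\zeta_1|>a_n\}]\cdot\frac1n\sum_kE|h_k|$; since $\alpha>1$, $E[|\zeta_1|\mathbf 1\{|\zeta_1|>a_n\}]=O(a_n/n)$, which yields a bound $O\!\left(a_n\cdot\frac1n\sum_kE|h_k|\right)=o(a_n)$.

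Combining the three estimates gives \eqref{d1} at rate $a_n$, and together with the $o_P(a_n)$ bounds on \eqref{d2} and \eqref{d3} this establishes \eqref{eq:4.21}. The crux throughout is the interplay between the regular--variation tail estimates for $b_0^2-\omega^2$ supplied by \eqref{stab1} and the decay $E|h_k|\to0$, $Eh_k^2\to0$ supplied by $|X_{k-1}|\to\infty$ in Lemma~\ref{lem:4.0}; everything else is a repetition of Lemma~\ref{lem:4.3}.
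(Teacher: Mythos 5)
Your proposal is correct, and its skeleton coincides with the paper's: \eqref{eq:4.20} is obtained by observing that the proof of \eqref{eq:4.10} in Lemma \ref{lem:4.3} only used second moments; \eqref{eq:4.21} is reduced via the decomposition \eqref{decomp} to the terms \eqref{d1}--\eqref{d4}, with \eqref{d2} handled by noting $\nu>2\alpha/(\alpha-1)>4$, \eqref{d3} by H\"older's inequality combined with $E|b_0|^{2\tau}<\infty$ for $\tau<\alpha$ (exactly the paper's argument), and \eqref{d4} trivial since $y=\sigma^2$. Where you genuinely diverge is the key term \eqref{d1} (the paper's \eqref{d5}). The paper truncates $\epsilon_k=b_k^2-\omega^2$ at the level $\tau_na_n$ with $\tau_n\to\infty$ chosen to diverge slowly: the variance of the centred truncated sum is controlled through $A(\tau_na_n)$ together with the regular-variation estimates \eqref{sl1}--\eqref{sl3} (a Potter-type bound on $L(\tau_nx)/L(x)$ being needed precisely because the truncation level sits above $a_n$), the exceedance part is killed outright since $\sum_kP\{|\epsilon_k|>\tau_na_n\}\to0$, and the centring term is handled via the tail first moment. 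You instead truncate at the fixed level $a_n$, control the centred part by the martingale identity plus Karamata's estimate $E[\zeta_1^2\mathbf 1\{|\zeta_1|\le a_n\}]=O(a_n^2/n)$, and then absorb both the large-jump part and the centring term in $L^1$, using $E[|\zeta_1|\mathbf 1\{|\zeta_1|>a_n\}]=O(a_n/n)$ multiplied by the Ces\`aro decay $\frac1n\sum_kE|h_k|\to0$. This trade is real: at level $a_n$ the exceedance probabilities do \emph{not} vanish ($nP\{|\zeta_1|>a_n\}\to1$), so the paper's ``large part is zero with high probability'' argument is unavailable to you, and your $L^1$ bound is borderline --- $nE[|\zeta_1|\mathbf 1\{|\zeta_1|>a_n\}]\asymp a_n$ --- so it succeeds only because of the extra $o(1)$ factor coming from $E|h_k|\to0$ (Lemma \ref{lem:4.0} plus bounded convergence). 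What you buy is the elimination of the auxiliary sequence $\tau_n$ and of the slowly-varying-ratio bound \eqref{sl3}; what the paper buys with $\tau_n$ is that the exceedance part needs no moment estimate at all, only a probability estimate. Both routes rest on the same structural facts (predictability and boundedness of $h_k$, $h_k\to0$ in probability, Karamata theory and the definition of $a_n$ via \eqref{sl2}), and both are complete proofs.
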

\begin{proof} We follow the proof of Lemma \ref{lem:4.3}. Since
the proof of \eqref{eq:4.10} required only that $Eb_0^2<\infty$,
we have \eqref{eq:4.20}. 

To prove \eqref{eq:4.21}, we use
\eqref{decomp}. It is assumed that $Ee_0^4<\infty$ and therefore
\eqref{d2} holds. 
Assumption \eqref{stab1} yields that $E|b_0|^{2\tau} <\infty$ for all $0<\tau<\alpha$, and therefore condition \eqref{stab3} with H\"older's inequality gives $Ee_0^2b_0^2<\infty$. Hence
\begin{align*}
E \Biggl( n^{-1/2} \sum_{1 \leq k \leq n}
\frac{X^2_{k - 1}}{(\omega^2 X^2_{k - 1} + y)^2}  2 e_k b_k X_{k -
1}\Biggr)^2 =\frac{4}{n}E(e_0b_0)^2\sum_{1\leq k \leq n}\left(\frac{X^3_{k - 1}}{(\omega^2 X^2_{k - 1} + y)^2}\right)^2
\longrightarrow 0.
\end{align*}
%
 Clearly, \eqref{d4} is satisfied.
Thus it is enough to show that
\beq\label{d5}
 \sum_{1 \leq
k \leq n} (\omega^2 - b^2_k) \biggl( \frac{X^4_{k - 1}}{(\omega^2
X^2_{k - 1} + y)^2} - \frac1{\omega^4} \biggr)=o_P(a_n).
 \eeq

Let
$$
\epsilon_k=b_k^2-\omega^2\;\;\mbox{and}\;\;z_{k-1}=\frac{X^4_{k -
1}}{(\omega^2 X^2_{k - 1} + y)^2} - \frac1{\omega^4}.
$$
It is clear that $|z_k|\leq c_1$ with some constant $c_1$. Also, 
according to Lemma \ref{lem:4.0}, $|z_k|\rightarrow 0$ in
probability, as $k\rightarrow \infty$, and therefore \beq\label{z}
\delta_k=Ez^2_k\rightarrow 0\;\; (k\rightarrow \infty). \eeq
 Fix $n$ and define
 $$
 \epsilon_k^*=\epsilon_kI\{|\epsilon_k|\leq \tau_n
 a_n\}\;\;\;\mbox{and}\;\;\;
 \epsilon^{**}_k=\epsilon^*_k-E\epsilon_k^*,\;\;\;\;1\leq k \leq n,
 $$
 where $\tau_n$ is a numerical sequence (to be chosen later) tending to $\infty$ and
 $I\{\cdot \}$ denotes the indicator function. Let
 $$
 A(t)=\int^t_{-t}x^2dF(x),
 $$
where $F$ denotes the distribution function of $\epsilon_0$. By
the classical theory of the domain of attraction of stable laws
(cf.\ Feller (1966, pp.\ 574--577)) we have that \beq\label{sl1}
\lim_{t\rightarrow \infty}\frac{A(t)}{t^{2-\alpha}L(t)}=c_2 \eeq
with some $0<c_2<\infty$. Also, we note that by the definition of
$a_n$ and the properties of regularly varying functions we get
that \beq\label{sl2} n L(a_n)/a_n^\alpha \rightarrow 1
\;\;\;(n\rightarrow \infty).
 \eeq
We also need that for any $\kappa>0$ there is a constant
$0<c_3<\infty$ such that
\beq\label{sl3} \frac{L(\lambda x)}{L(x)}\leq
c_3\lambda^\kappa\;\;\;\mbox{for all}\;\;\;\lambda\geq
1\;\;\mbox{and}\;\;x\geq 1.
\eeq
The assertion in \eqref{sl3} is an immediate consequence of the
monotone equivalence theorems in  Bingham et al (1987, p.\ 23).
Indeed, there is a non--increasing regularly varying function $\psi$ such that
$$
\lim_{x\rightarrow \infty}\frac{x^{-\kappa} L(x)}{\psi(x)}=1,
$$
and since $\psi(\lambda x)\leq \psi(x)$ for all $\lambda \geq 1$
and $x\geq 1$, so \eqref{sl3} is proven.

Using the independence of $\epsilon_k^{**}$ and $z_{k-1}$ we
conclude
\begin{align*}
E(\epsilon^{**}_kz_{k-1})^2=E(\epsilon^{**}_k)^2Ez_{k-1}^2\leq
E(\epsilon^{*}_k)^2Ez_{k-1}^2=A(\tau_na_n)\delta_{k-1}
\end{align*}
and the orthogonality of $\{\epsilon^{**}_kz_{k-1}, k\geq 1 \}$
yields
$$
\mbox{var}\biggl(\frac{1}{a_n}\sum_{1\leq k \leq
n}\epsilon^{**}_kz_{k-1}\biggl)=\frac{1}{a_n^2}\sum_{1\leq k \leq
n}E(\epsilon^{**}_kz_{k-1})^2\leq
\frac{1}{a_n^2}A(\tau_na_n)\sum_{1\leq k \leq n}\delta_{k-1}.
$$
Combining \eqref{sl1}--\eqref{sl3} we get that
\begin{align*}
\mbox{var}\biggl(\frac{1}{a_n}\sum_{1\leq k \leq
n}\epsilon^{**}_kz_{k-1}\biggl) &={\cal
O}(1)a^{-2}_n(\tau_na_n)^{2-\alpha}L(\tau_n a_n)\sum_{1\leq k \leq
n}\delta_{k-1} \notag
\vspace{.3 cm}\\
&={\cal O}(1)\tau_n^{2-\alpha}\frac{L(\tau_n
a_n)}{L(a_n)}\frac{1}{n}\sum_{1\leq k \leq n}\delta_{k-1}\notag
\vspace{.3 cm}\\
&= {\cal O}(1)\tau_n^{2-\alpha+\kappa}\frac{1}{n}\sum_{1\leq k
\leq n}\delta_{k-1}.\notag
\end{align*}
By \eqref{z}, if $\tau_n\rightarrow \infty $ slowly enough, then
$\tau_n^{2-\alpha+\kappa}\sum_{1\leq k \leq
n}\delta_{k-1}/n\rightarrow 0,$ showing that
$$
\mbox{var}\biggl(\frac{1}{a_n}\sum_{1\leq k \leq
n}\epsilon^{**}_kz_{k-1}\biggl) =o(1).
$$
Using the definitions of $\epsilon^*_k$ and $a_n$ together with \eqref{stab1}, \eqref{sl2} and \eqref{sl3}, we obtain that
$$
\sum_{1\leq k \leq n}P\{\epsilon^*_k\neq \epsilon_k\}=n(1-F(\tau_n
a_n))\rightarrow 0\;\;\;(n\rightarrow \infty).
$$
Next we observe that 
$$
\frac{n}{a_n}\int_{-\tau_n a_n}^{\tau_n a_n}xdF(x)\rightarrow 0
\;\;\;(n\rightarrow \infty),
$$
if $\tau_n\rightarrow \infty $ slowly enough, so using $z_k\to 0\;(k\to \infty)$
we conclude that
$$
\sum_{1\leq k \leq n}z_{k-1}E\epsilon_k^*=\sum_{1\leq k \leq
n}z_{k-1}\int_{-\tau_n a_n}^{\tau_n
a_n}xdF(x)=o(1)n\int_{-\tau_n a_n}^{\tau_n a_n}xdF(x)=o(a_n).
$$
Now the proof of \eqref{d5} is complete.
\end{proof}

\begin{proof}[Proof of Theorem \ref{stable}]
Using \eqref{eq:4.13} we get

\beq\label{eq:4.13b} \widehat{\eta}_{n,1}(\sigma^2)-\varphi=-\left(
c_{11}(n)\frac{1}{n}g_{1,n}(\sigma^2)+c_{12}(n)\frac{1}{n}g_{2,n}(\sigma^2)
  \right),
\eeq

and

\beq\label{eq:4.13c} \widehat{\eta}_{n,2}(\sigma^2)-\omega^2=-\left(
c_{21}(n)\frac{1}{n}g_{1,n}(\sigma^2)+c_{22}(n)\frac{1}{n}g_{2,n}(\sigma^2)
  \right),
\eeq

where $c_{ij}$ are defined in \eqref{inv}. By Lemma \ref{lem:4.6}, 
\eqref{eq:4.14} and  \eqref{inv} we get that
\beq\label{first}
 \frac{n}{a_n}(\widehat{\eta}_{n,2}(\sigma^2)-\omega^2)=\frac{1}{a_n}\sum_{1\leq
 k \leq n}(b_k^2-\omega^2)+o_P(1).
\eeq
Since \eqref{kell} clearly holds, we also have \eqref{cs} and from \eqref{eq:4.21} we obtain that
$$
\widehat{\eta}_{n,1}(\sigma^2)-\varphi=
c_{11}(n)\frac{1}{n}g_{1,n}(\sigma^2)+(|\widehat{\eta}_{n,1}-\varphi|{\cal O}_P(1)+{\cal
O}_P(n^{-1/2})){\cal O}_P(a_n/n).
$$
Hence by \eqref{eq:4.14} and \eqref{eq:4.20} we have
\beq\label{second} n^{1/2}(\widehat{\eta}_{n,1}(\sigma^2)-\varphi)=
n^{-1/2}\sum_{1\leq k \leq n}b_k +o_P(1). \eeq
The convergence in distribution of
$n^{1/2}(\widehat{\eta}_{n,1}(\sigma^2)-\varphi)$ and $n
(\widehat{\eta}_{n,2}(\sigma^2)-\omega^2)/a_n$ now follows from
\eqref{first} and \eqref{second}; only the asymptotic independence
must be established. Note that the vector $(\sum_{1\leq k
\leq}b_k/n^{1/2},\sum_{1\leq k \leq n}(b_k^2-\omega^2)/a_n)$
converges in distribution (cf.\ Section 10.1 in Meerschaert and
Scheffler (2001)). The first coordinate of the limit is normal,
the second does not contain normal component and therefore the
coordinates of the limit distribution are independent (Meerschaert
and Scheffler (2001, p.\ 41)).
\end{proof}

\section{Proofs of Theorem \ref{po} and Corollaries \ref{co:1}--\ref{co:3} }\label{more}

Using \eqref{def} one can easily verify that
\[
X_\ell= \sum^\ell_{i = 1} e_i \prod^\ell_{j = i + 1} (\varphi + b_j)+
X_0\prod^\ell_{j= 1}(\varphi + b_j),
\]
and therefore
\begin{align}\label{basic}
\displaystyle \left({\displaystyle\prod^\ell_{j= 1}(\varphi + b_j)}\right)^{-1}
X_\ell&=\sum^\ell_{i = 1} e_i \left(\prod^i_{j = 1} (\varphi + b_j)\right)^{-1}+X_0
\vspace{.3 cm}\\
&=\sum^\ell_{i = 1} e_i e^{-S(i)}\gamma_i +X_0. \notag
\end{align}

\begin{proof}[Proof of Theorem~\ref{po}]
First we note that assumption \eqref{n1} yields 
$$
|e_i|={\cal O}(e^{ic_1}) \;\;\mbox{a.s. for any }\; c_1>0
$$
(cf.\ Berkes at al (2003)) and therefore by the strong law of large numbers
$$
e^{-S(i)}=o(e^{-ic_2})\;\;\mbox{a.s. for any } 0<c_2<E|\xi_0|.
$$
Hence $Y$ is absolutely convergent with probability one and the result follows immediately from \eqref{basic}.
\end{proof}

The proof of the second part of Theorem \ref{nondeg} is based on the following lemma:
\begin{lemma}\label{levy} If \eqref{eq:1.2},\eqref{n1}, \eqref{def} \eqref{b4}, \eqref{b1} and \eqref{c0} hold, then
\[
P\{ Y = c\} = 0 \quad \text{ for any } \ c.
\]
\end{lemma}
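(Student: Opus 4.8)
The plan is to strip out the randomness coming from the coefficients by conditioning on the $b$'s, which turns $Y$ into an a.s.\ convergent series of \emph{independent} summands, and then to invoke the classical continuity criterion for infinite convolutions (P.\ L\'evy, Jessen--Wintner). Write $\mathcal B=\sigma(b_j:j\ge1)$ and recall from \eqref{basic} that $Y=\sum_{i\ge1}a_ie_i$ with $a_i=e^{-S(i)}\gamma_i$. By \eqref{b1} the sequence $\{e_k\}$ is independent of $\mathcal B$, so conditionally on $\mathcal B$ the coefficients $a_i$ are constants while the $e_i$ stay i.i.d.; hence, conditionally, $Y$ is a sum of the independent random variables $a_ie_i$, to which the L\'evy criterion can be applied.

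First I would check the hypotheses of that criterion for (almost) every realization of $\mathcal B$. The a.s.\ absolute convergence of $Y$ proved in Theorem~\ref{po} gives, via Fubini, conditional a.s.\ convergence of $\sum_i a_ie_i$ for a.e.\ realization of $\mathcal B$. Next, \eqref{b4} together with \eqref{n1} forces $P\{\varphi+b_0=0\}=0$ (otherwise $E\log|\varphi+b_0|=-\infty$), so a.s.\ every factor $\varphi+b_j\neq0$, whence $a_i=e^{-S(i)}\gamma_i\neq0$ for all $i$. Consequently, since $e_i$ is independent of $\mathcal B$,
\[
p_i:=\sup_x P\{a_ie_i=x\mid\mathcal B\}=\sup_c P\{e_0=c\}=:q ,
\]
and the non--degeneracy assumption \eqref{c0} gives $q<1$. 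Therefore $\prod_{i\ge1}p_i=\prod_{i\ge1}q=0$ for every realization of $\mathcal B$. By L\'evy's continuity criterion (the law of a convergent series $\sum_iX_i$ of independent terms is continuous if and only if $\prod_i\sup_x P\{X_i=x\}=0$), the conditional law of $Y$ given $\mathcal B$ is continuous a.s. In particular $P\{Y=c\mid\mathcal B\}=0$ a.s.\ for each fixed $c$, and integrating over $\mathcal B$ yields $P\{Y=c\}=E\,P\{Y=c\mid\mathcal B\}=0$, which is the claim.

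The hard part is structural rather than computational: it lies in arranging the conditional independence so that L\'evy's theorem applies, and in reducing its hypothesis $\prod_ip_i=0$ to a single qualitative condition. This is exactly where \eqref{b1} enters (to freeze the $a_i$ and keep the $e_i$ independent) and where \eqref{c0} enters (to force $q<1$, and with it $\prod_ip_i=0$). If one prefers to avoid quoting the infinite--convolution criterion, the same conclusion follows from the perpetuity identity $Y\stackrel{d}{=}(\varphi+b_0)^{-1}(e_0+Y^{*})$ with $Y^{*}$ an independent copy of $Y$: by the dichotomy for random affine fixed points the law of $Y$ is either degenerate or continuous, and degeneracy at a point $y_0$ would require $e_0=y_0(\varphi-1+b_0)$ a.s., which under the independence of $e_0$ and $b_0$ makes $e_0$ a.s.\ constant, contradicting \eqref{c0}. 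Either route reduces the lemma to the non--degeneracy of $e_0$.
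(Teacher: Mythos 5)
Your proof is correct and takes essentially the same route as the paper: the paper likewise conditions on the coefficient sequence (there, on the $\xi_j$'s) so that the summands $e^{-S(i)}\gamma_i e_i$ become independent with frozen coefficients, uses \eqref{b1} and \eqref{c0} to show that $\sum_{i} P\{e^{-S(i)}\gamma_i e_i \neq a_i\}=\infty$ for every constant sequence $(a_i)$, and then invokes L\'evy (1931) (cf.\ Breiman (1968, p.~51)), which is the same atom/continuity criterion you apply in its product-of-maximal-atoms form. If anything, your write-up is slightly more careful, since you apply the criterion conditionally on $\mathcal B$ --- where the summands genuinely are independent --- and only then integrate out, whereas the paper states the divergence unconditionally before citing L\'evy.
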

\begin{proof}
First we show that for any sequence $a_n$ \beq \sum_{1 \leq i <
\infty} P \bigl\{ e^{-S(i)} \gamma_{i } e_i \neq a_i \bigm|
\xi_j, -\infty < j < \infty\bigr\} = \infty \;\;\text{ a.s.}
\label{eq:3.2} \eeq Since $E \xi_0$ exists, we get $P \{ \xi_0 =
0\}=0$, so $\gamma_i$ can be $0$ only with probability 0. Hence
\eqref{eq:3.2} holds, if for any sequence $b_n$ \beq \sum_{1 \leq
i < \infty} P \{e_i \neq b_i \} = \infty. \label{eq:3.3} \eeq By
\eqref{b1}, we have \eqref{eq:3.3} if and only if \beq \sum_{1
\leq i < \infty} P \{e_0 \neq b_i \} = \sum_{1 \leq i < \infty}
\bigl( 1 - P\{ e_0 = b_i \} \bigr) = \infty. \label{eq:3.4} \eeq
If $P\{e_0 = b_i\} \to 1$, then $e_0$ must be a constant with
probability 1, contradicting \eqref{c0}.

Using \eqref{eq:3.2} we get that for any sequence $a_n$ \beq
\sum_{1 \leq i < \infty} P \bigl\{ e^{-S(i)} \gamma_{i }
e_i \neq a_i \bigr\} = \infty, \eeq and therefore
Lemma~\ref{levy} follows from L\'evy (1931) (cf.\ also Breiman
(1968, p.~51)).
\end{proof}

\begin{lemma}\label{approx} If \eqref{eq:1.2}, \eqref{eq:1.6}--\eqref{def}, \eqref{n3}, 
\eqref{b4} and \eqref{smooth} or \eqref{b1} and \eqref{c0} hold, then 
\beq \label{eq:5.10}
\biggl| g_{1,n}(y) -
\sum_{1 \leq k \leq n} \frac{b_k}{\omega^2} \biggr| = {\cal O}(1)\;\;
\mbox{a.s.}
\eeq
 and
\beq \label{eq:5.11}
\biggl| g_{2,n}(y) - \sum_{1 \leq k
\leq n} \frac1{2\omega^4} (b_k^2-\omega^2 ) \biggr| =
{\cal O}(1)\;\;
\mbox{a.s.}
\eeq
 where $g_{1,n}(y)$ and
$g_{2,n}(y)$ are the partial derivatives of $L_n({\bf u})$ with
respect to $s$ and $x$ at $(\varphi, \omega^2, y)$.
\end{lemma}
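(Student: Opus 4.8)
The plan is to prove an almost-sure sharpening of Lemma~\ref{lem:4.3}, with the exponential growth of $|X_n|$ under \eqref{b4} as its engine. The point is that once $|X_{k-1}|$ is known to grow geometrically while $|b_k|,|e_k|$ grow only subexponentially, every remainder term that was controlled in $L^2$ and killed by the choice $y=\sigma^2$ in Lemma~\ref{lem:4.3} becomes the tail of an absolutely convergent series; thus the differences in \eqref{eq:5.10} and \eqref{eq:5.11} actually converge a.s.\ (hence are $O(1)$), and the restriction $y=\sigma^2$ disappears.

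I would first record two a.s.\ bounds. Fix $0<\tau<E\log|\varphi+b_0|$. By Theorem~\ref{po} together with Theorem~\ref{nondeg} (whose hypotheses coincide with those assumed here, \eqref{n1} following from \eqref{eq:1.6}--\eqref{eq:1.7}), $e^{-S(n)}\gamma_nX_n\to X_0+Y$ a.s.\ with $P\{X_0+Y\neq0\}=1$; since $S(n)/n\to E\log|\varphi+b_0|$ by the strong law of large numbers, $n^{-1}\log|X_n|\to E\log|\varphi+b_0|$ a.s.\ (this is Corollary~\ref{co:1}), so a.s.\ there is a random $k_0$ with $|X_{k-1}|\geq e^{\tau(k-1)}$ for $k\geq k_0$. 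Secondly, $E\log^+|b_0|<\infty$ and $E\log^+|e_0|<\infty$ (from \eqref{eq:1.6}--\eqref{eq:1.7}), so the Borel--Cantelli argument used for $e_i$ in the proof of Theorem~\ref{po} gives, for every $c_1>0$, $|b_k|\vee|e_k|\leq e^{c_1k}$ eventually a.s. I fix once and for all $c_1\in(0,\tau/2)$; this single choice makes all the series below summable.

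For \eqref{eq:5.10}, using $X^2/(\omega^2X^2+y)-1/\omega^2=-y/(\omega^2(\omega^2X^2+y))$ I would write
\[
g_{1,n}(y)-\sum_{1\leq k\leq n}\frac{b_k}{\omega^2}
=-\frac{y}{\omega^2}\sum_{1\leq k\leq n}\frac{b_k}{\omega^2X^2_{k-1}+y}
+\sum_{1\leq k\leq n}\frac{e_kX_{k-1}}{\omega^2X^2_{k-1}+y}.
\]
The $k$-th terms are bounded a.s.\ by $|b_k|/(\omega^2X^2_{k-1})\leq\omega^{-2}e^{c_1k-2\tau(k-1)}$ and by $|e_k|/(\omega^2|X_{k-1}|)\leq\omega^{-2}e^{c_1k-\tau(k-1)}$; since $c_1<\tau$, both are general terms of convergent series, so each partial sum converges a.s.\ and is in particular $O(1)$. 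For \eqref{eq:5.11} I would start from the decomposition \eqref{decomp}; with $z_{k-1}=X^4_{k-1}/(\omega^2X^2_{k-1}+y)^2-\omega^{-4}$ as in Lemma~\ref{lem:4.6}, the leading piece reproduces the target sum $\sum(b_k^2-\omega^2)/(2\omega^4)$ up to the error $\frac12\sum(b_k^2-\omega^2)z_{k-1}$, and the remaining three pieces of \eqref{decomp} produce the other three sums, so that
\begin{align*}
g_{2,n}(y)-\sum_{1\leq k\leq n}\frac{b_k^2-\omega^2}{2\omega^4}
&=\frac12\sum_{1\leq k\leq n}(b_k^2-\omega^2)z_{k-1}
+\frac12\sum_{1\leq k\leq n}\frac{(e_k^2-\sigma^2)X^2_{k-1}}{(\omega^2X^2_{k-1}+y)^2}\\
&\quad+\sum_{1\leq k\leq n}\frac{e_kb_kX^3_{k-1}}{(\omega^2X^2_{k-1}+y)^2}
-\frac{y-\sigma^2}{2}\sum_{1\leq k\leq n}\frac{X^2_{k-1}}{(\omega^2X^2_{k-1}+y)^2}.
\end{align*}
The identity $z_{k-1}=-y(2\omega^2X^2_{k-1}+y)/(\omega^4(\omega^2X^2_{k-1}+y)^2)$ gives $|z_{k-1}|\leq C/X^2_{k-1}$, so the four $k$-th terms are bounded a.s.\ by constants times $e^{2c_1k-2\tau(k-1)}$, $e^{2c_1k-2\tau(k-1)}$, $e^{2c_1k-\tau(k-1)}$ and $e^{-2\tau(k-1)}$, all summable once $c_1<\tau/2$; hence every one of the four series converges a.s. Note that the last sum, which in Lemma~\ref{lem:4.3} could be dropped only by taking $y=\sigma^2$ (cf.\ \eqref{d4}), is now absolutely summable for every $y>0$ thanks to the exponential decay of $X^{-2}_{k-1}$, which is exactly why \eqref{eq:5.11} holds for all $y$.

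The only real work is the bookkeeping of exponents: one checks that the single constant $c_1$ can be placed below the smallest critical threshold (here $\tau/2$, forced by the $e_kb_kX_{k-1}$ term) so that all bounds are simultaneously summable, and that the finitely many a.s.\ exceptional sets may be unioned. I do not expect any genuine obstacle: unlike Lemma~\ref{lem:4.6}, no stable-law truncation is required, because the geometric growth of $|X_n|$ turns all remainders into absolutely convergent series rather than merely tight sequences.
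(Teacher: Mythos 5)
Your proof is correct and takes essentially the same route as the paper's: both return to the decompositions of $g_{1,n}$ and $g_{2,n}$ from Lemma \ref{lem:4.3} and use Theorem \ref{po} together with the nondegeneracy of $X_0+Y$ (Theorem \ref{nondeg}, Lemma \ref{levy}) to turn every remainder term--including the $(y-\sigma^2)$ term that forced $y=\sigma^2$ in Lemma \ref{lem:4.3}--into an absolutely convergent series. The only cosmetic difference is that the paper keeps the random normalization $e^{-S(k-1)}$ and invokes the a.s.\ summability of $\sum_k |e_k|e^{-S(k-1)}$ (Berkes et al.\ 2003), whereas you pass to deterministic exponential rates via Corollary \ref{co:1} and Borel--Cantelli, which is the same subexponential-versus-exponential comparison underlying the paper's own proof of Theorem \ref{po}.
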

\begin{proof}[Proof of Lemma \ref{approx}]
We return to the decompositions of  $g_{1,n}(y)$ and $g_{2,n}(y)$
used in the proof of Lemma \ref{lem:4.3}. Using Theorem \ref{po}
and Lemma \ref{levy} we get that
\begin{align}
\biggl|\sum_{1\leq k \leq n}& \frac{e_k
X_{k-1}}{\omega^2X^2_{k-1}+y}\biggl|
 \notag
 \vspace{.3 cm}\\
  &\leq
\sum_{1\leq k \leq n}|e_k| \frac{|X_{k-1}|}{\omega^2X^2_{k-1}+y}
\notag
\vspace{.3 cm}\\
 &=\sum_{1\leq k \leq
 n}|e_k|e^{-S(k-1)}\bigl(e^{-S(k-1)}|X_{k-1}|\bigl)\bigl(\omega^2(e^{-S(k-1)}X_{k-1})^2+e^{-2S(k-1)}y\bigl)^{-1}
\notag
\vspace{.3 cm}\\
&\leq \biggl\{\max_{1\leq k
<\infty}\bigl(e^{-S(k-1)}|X_{k-1}|\bigl)\bigl(\omega^2(e^{-S(k-1)}X_{k-1})^2+e^{-2S(k-1)}y\bigl)^{-1}\biggl\}\sum_{1\leq
k \leq
 n}|e_k|e^{-S(k-1)}
\notag
\vspace{.3 cm}\\
&={\cal O}(1)\;\;\;\;\;\mbox{a.s.}\notag
\end{align}
since  by Berkes et al (2003), $\sum_{1\leq k \leq
n}|e_k|e^{-S(k-1)}$ is finite with probability one. Similar
arguments give
$$
\left|\sum_{1\leq k \leq n}b_k\left\{
\frac{X^2_{k-1}}{\omega^2X^2_{k-1}+y}-\frac{1}{\omega^2}\right\}\right|
\leq \sum_{1\leq k \leq
n}|b_k|\frac{1}{\omega^2}\frac{|y|}{\omega^2X^2_{k-1}+y}={\cal
O}(1)\;\;\;\;\;\mbox{a.s.},
$$
completing the proof of \eqref{eq:5.10}.

The proof of \eqref{eq:5.11} goes along the same lines and hence
it is omitted.
\end{proof}

\begin{proof}[Proof of Corollary \ref{co:1}] It is an immediate consequence of the strong law of large numbers and Theorems \ref{po} and \ref{nondeg}.

\end{proof}

\begin{proof}[Proof of Corollary \ref{co:2}] The proof of Theorem
\ref{th:2.4} can be repeated; only Lemma \ref{lem:4.3} must be
replaced with Lemma \ref{approx}.
\end{proof}
\begin{proof}[Proof of Corollary \ref{co:3}] Minor modifications of the proof of  Theorem
\ref{stable} are required only. Namely,   one must use Lemma \ref{approx} instead of Lemma \ref{lem:4.6}.
\end{proof}

\end{document}